\newcommand{\ubar}[1]{\underaccent{\bar}{#1}}
\DeclareMathOperator\co{co}
\newtheorem{theorem}{Theorem}[section]
\newtheorem{proposition}[theorem]{Proposition}
\newtheorem{lemma}[theorem]{Lemma}
\newtheorem{corollary}[theorem]{Corollary}
\theoremstyle{definition}
\newtheorem{definition}[theorem]{Definition}
\newtheorem{remark}[theorem]{Remark}
\definecolor{backcolour}{rgb}{0.63, 0.79, 0.95}
\lstdefinestyle{mystyle}{
  backgroundcolor=\color{backcolour},
  basicstyle=\ttfamily\footnotesize,
  breakatwhitespace=false,         
  breaklines=true,                 
  captionpos=b,                    
  keepspaces=true,                 
  numbers=left,                    
  numbersep=5pt,                  
  showspaces=false,                
  showstringspaces=false,
  showtabs=false,                  
  tabsize=2
}
\providecommand{\keywords}[1]{\textbf{\textit{Keywords:}} #1}
\providecommand{\jel}[1]{\textbf{\textit{JEL Classifications:}} #1}
\begin{document}
\author{Marilyn Pease \and Mark Whitmeyer\thanks{MP: Kelley School of Business, Indiana University, \href{marpease@iu.edu}{marpease@iu.edu} \& MW: Arizona State University, \href{mailto:mark.whitmeyer@gmail.com}{mark.whitmeyer@gmail.com}. This paper was previously titled ``Safety, in Numbers.'' We thank Brian Albrecht, 
Tilman B\"{o}rgers, Hector Chade, Francesco Fabbri, B\r{a}rd Harstad, Shuo Liu, Kevin Reffett, Eddie Schlee, Alex Teytelboym, Joseph Whitmeyer, Tom Wiseman, Kun Zhang, and conference and seminar audiences at Duke/UNC, Durham University, Georgia Tech, Texas A\&M, and the University of Queensland. Konstantin von Beringe provided excellent research assistance. Pease thanks the John Rau Kelley School of Business Faculty Fellowship.}}

\title{Playing It Safe: Actions Attractive to the Risk Averse}
\maketitle

\begin{abstract}
We introduce a way to compare actions in decision problems. One action is \textit{safer} than another if the set of beliefs at which the decision-maker prefers the safer action expands as the decision-maker becomes more risk averse. We provide a full characterization of this relation, show that it is equivalent to robust conceptions of single-crossing and second-order stochastic dominance, and reveal that in monotone decision problems it totally orders the decision-maker's set of actions. We discuss applications to games, insurance, investment hedging, and security design.
\end{abstract}
\keywords{Decision-making under Uncertainty; Stochastic Dominance; Risk Aversion; Robust Comparisons}\\
\jel{D0; D8} 

\section{Introduction}

\textit{Risk} is a central component of decision making under uncertainty. By now it is understood that in many situations, decision-makers (DMs) dislike risk, and such aversion affects behavior. Moreover, there is a standard notion of what it means for one DM to be more risk averse than another: a more risk-averse agent eschews any risk that a less-risk averse one does. In the expected-utility realm, the more risk-averse DM has a utility that is ``more concave'' than the other's.

A closely-related literature then asks how to compare \emph{lotteries} in a way that speaks to such utility transformations. One canonical question is: when does every expected-utility DM with a concave utility rank one lottery above another? A second question fixes a benchmark and asks when a preference comparison is preserved under \emph{all} increases in risk aversion. These questions lead to different partial orders on distributions; second-order stochastic dominance, and (for various versions of the preservation problem) single-crossing conditions on the lotteries' distributions.

In a number of economic settings, however, the primitive object is not a lottery but an \emph{action} (or strategy)--a state-contingent payoff profile. A belief (a probability distribution over states) turns an action into a lottery, and the same action can induce very different lotteries under different beliefs. As a result, lottery orders do not answer the natural question we face in many applications: when can we rank actions in a way that does not depend on fixing a single belief, and whose defining feature is robustness to increases in risk aversion?

This distinction matters in exactly the environments where beliefs are not a fixed, single-valued, and \textit{known} background object. For example, in models with learning, different pieces of information generate different beliefs, and we would like the comparison of actions to hold across these events. In games, beliefs are conjectures about opponents' strategies and are part of the equilibrium description, so the relevant randomness is endogenous to the strategic environment. Furthermore, as analysts we often want conclusions that do not hinge on a single specified belief, either because beliefs are heterogeneous across a population or because we want robust predictions that do not hinge on what we assume agents think. These considerations motivate an ordering defined on actions that is robust to variation in beliefs, indeed, one that is fundamentally \textit{distribution free}.

The main contribution of this paper is to provide such an action-level theory. We define a robust ``safer-than'' relation on actions that requires an agent's preference between two actions to survive every strictly increasing, concave transformation of her utility function and to hold regardless of beliefs. We then show that this behavioral requirement admits a sharp, tractable characterization, equivalent to a collection of simple state-by-state inequalities, to a single-crossing property of the lotteries induced by the actions under every belief, and to a collection of rankings according to second-order stochastic dominance.

To elaborate, we say that one action \(a\) is \emph{safer} than another action \(b\), \(a \succeq_S b\), if the set of beliefs at which action \(a\) is preferred to \(b\) grows--in the set inclusion sense--as the DM becomes more risk averse. That is, safer actions become more attractive whenever the DM's utility function is transformed by a strictly increasing, concave function. Our main result, Theorem \ref{thm:central}, provides a full characterization of this relation. Our principal equivalence is that \(a\) is safer than \(b\) if and only if for any pair of states in which \(a\) is strictly preferred to \(b\) in one and \(b\) to \(a\) in the other, the payoffs to \(a\) lie within the convex hull of the payoffs to \(b\). Thus, safety is an \textit{ordinal} notion. 

Another pair of equivalences we prove in the theorem connects to the classic literature. For any fixed belief, an action induces a lottery over monetary prizes. We establish that \(a\) is safer than \(b\) if and only if for any belief, the cdfs of the induced lotteries cross once. This, in turn, allows us to equate safety with a collection of rankings of the lotteries according to second-order stochastic dominance. Thus, \textit{we provide the distribution-free analog of both the single-crossing relation and the second-order stochastic dominance relation between lotteries, and show, moreover, that they coincide}.

The safety relation is not transitive, which guides us to monotone decision problems, in which the sets of states and actions are totally ordered, higher states are preferred by the agent, and higher actions are better in higher states. Proposition \ref{bigequivalence} reveals that in monotone environments, safety is not only transitive, but totally orders the set of actions. Notably, in this class of problems, safer actions are lower actions, revealing that the optimality of higher actions is more sensitive to the agent's risk aversion.

\medskip

\noindent \textbf{\textcolor{BlueViolet}{Applications:}} We discuss the usefulness of the safer-than relation in four environments. First, we study games, using safety to present comparative statics for a broad class of binary ``contribution games'' that include as special cases voting and whistleblowing. We also connect safety to risk-dominance and the global games literature. Second, we use our relation to rank securities according to their sensitivity to an investor's risk aversion. We show that safety coincides with single-crossing (``steepness'') of the securities, which is a well-established measure for the information-sensitivity of the securities. Third, we use safety to rank varieties of insurance.

Fourth, we use our relation to evaluate robust hedging in an investment setting. Given an investor's current holdings, we formulate a binary relation between assets: one asset ``hedges'' better than another if the set of beliefs justifying it expands as the investor becomes more risk averse. Importantly, the belief-free formulation of our relation enables us to allow for arbitrary correlation between the investor's current holdings and the assets under scrutiny. Finally, in a companion paper, \cite{peasewhitmeyerauctions}, we show that bids in broad classes of auctions can be ranked according to safety, before using this observation to generalize and elucidate classic comparative statics results on bidders' risk aversion in auctions.

\subsection{Related Work}

\cite*{yaari69} introduces an important comparative notion of risk aversion: ``Mr. A is more risk averse than Mr. B if...every gamble which is acceptable to A is acceptable to B."\footnote{Similarly, \cite*{ghirardato2002ambiguity} define when preferences are ``more ambiguity averse" than others.}
We reveal that there is a natural way to broaden this notion beyond a total absence of risk. That is, \citeauthor{yaari69}'s behavioral characterization of ``less risk averse'' can be redefined to mean increased willingness to pick an action over a safer one, instead of just the safest one.

\cite*{hammond1974simplifying}, \cite*{lambert1979attitudes}, \cite*{karlin1963generalized}, and \cite*{jewitt1987risk} all contain results concerning when an action that is preferred to another given some utility function must still be preferred following any increase in the agent's risk aversion. The key difference is that the condition in these papers is on the \textit{distributions} over wealth obtained by the agent as a result of her choice of actions. In this context, one contribution of our paper is to formulate a way of comparing actions' comparative robustness that is \textit{distribution-free}. In our main theorem, we connect our concept of safety to the single-crossing property identified by these works and reveal that safety is equivalent to a robust notion of single-crossing of the induced distributions over wealth for all subjective beliefs.

\cite*{flexibilitypaper} is also related to our work. That paper studies transformations of decision problems that render information more valuable to a DM. Here, we study a particular variety of transformation--an increase in the DM's risk aversion--and focus on its effect on the optimality of various actions.

This paper also harkens to the comparative statics literature; see, e.g., \cite*{milgrom1994monotone}, \cite*{edlin1998strict}, and \cite*{athey2002monotone}. We also vary a parameter, the DM's risk aversion, and ask how this affects the DM's behavior. However, we focus on comparisons between actions and make our relation quite demanding: the enlargement of the set of beliefs at which an action is preferred to another must arise for any monotone concave transformation of the DM's utility. In that light, our paper relates to the works that study the aggregation of the single-crossing property (\cite*{quah2012aggregating}, \cite*{choi2017ordinal}, and \cite*{kartik2023single}). We elaborate on the connection of our main result to \cite*{quah2012aggregating} in \S\ref{sec:srm}.

Our paper also adds to a literature that aims to measure the riskiness or ambiguity of lotteries or acts. \cite*{ROTHSCHILD1970225} introduces a seminal definition for comparing the riskiness of lotteries with the same mean. \cite*{aumann2008economic} propose a different measure of lottery riskiness that uses the reciprocal of the absolute risk aversion of a DM, while \cite*{foster2009operational} extend this definition to depend only on the gamble, not on the DM, defining a wealth level below which it is risky to accept the lottery. Our measure of safety, however, is not a ranking of lotteries; rather it is a belief-free comparison that relies only on the state-contingent payoffs. \cite*{jewitt2017ordering} rank the ambiguity of (Anscombe-Aumann) acts. Their findings are closer to the spirit of these other works, rather than ours: one act is more ambiguous than another if it is (in a sense) an ambiguity-preserving spread of the other.

Finally, our work is related to the striking results of \cite*{battigalli2016note} and \cite*{weinstein2016effect}, who prove that increased risk aversion on the part of the DM enlarges the set of justifiable actions, the actions that are optimal at some belief. That is, a justifiable action remains justifiable if the DM becomes more risk averse; and increased risk aversion may render actions optimal that had previously been strictly dominated. We, instead, study the properties of decision problems and actions therein for which increased risk aversion enlarges the set of beliefs at which some actions are optimal. Informally, as risk aversion increases, more actions become justifiable, \textit{but some (safer ones) become more justifiable than others.}

\section{Model}

There is an unknown state of the world \(\theta\), which is an element of some topological space of states \(\Theta\), endowed with the Borel \(\sigma\)-algebra. We specify further that \(\Theta\) is compact and metrizable. We denote the set of all Borel probability measures on \(\Theta\) by \(\Delta \equiv \Delta \left(\Theta\right)\). Our protagonist is a decision-maker (DM) with a set of actions \(A \subseteq \mathbb{R}^{\Theta}\), where each action is a continuous function from the set of states, \(\Theta\), to the set of outcomes (monetary payoffs) \(\mathbb{R}\). The decision-maker has a utility function defined on the outcome space \(u \colon \mathbb{R} \to \mathbb{R}\), which we assume is continuous and strictly increasing.

Let \(a_\theta \in \mathbb{R}\) denote the payoff to action \(a\) in state \(\theta\). The DM has a subjective belief \(\mu \in \Delta\); and she is a subjective expected-utility (EU) maximizer. We assume that no action in \(A\) weakly dominates another: for all \(a,b \in A\), there exists some \(\theta, \theta' \in \Theta\) such that \(a_\theta > b_\theta\) and \(b_{\theta'} > a_{\theta'}\). We say that the DM becomes more risk averse if her utility function is \(\hat{u}\) where \(\hat{u} = \phi \circ u\) for some strictly increasing concave \(\phi\).

For any two actions \(a, b \in A\) (\(a \neq b\)), we define the set \(P_{a, b}\left(a\right)\) to be the subset of the probability simplex on which action \(a\) is weakly preferred to \(b\) under \(u\); formally,
\[P_{a,b}\left(a\right) \coloneqq \left\{\mu \in \Delta \colon \mathbb{E}_{\mu}u\left(a_\theta\right) \geq \mathbb{E}_{\mu}u\left(b_\theta\right)\right\}\text{.}\]
By assumption this set is non-empty and--when \(\Delta\) is finite dimensional--of full dimension in \(\Delta\). When the utility function is \(\hat{u}\), we define the set \(\hat{P}_{a,b}\left(a\right)\) in the analogous manner.

\begin{definition}
    Action \(a\) is \emph{safer} than action \(b\) if for any strictly increasing \(u\) and any strictly increasing, concave \(\phi\), \(P_{a,b}\left(a\right) \subseteq \hat{P}_{a,b}\left(a\right)\); i.e., the set of beliefs at which action \(a\) is preferred to \(b\) increases in size as the DM becomes more risk averse.
\end{definition}
Equivalently, action \(a\) is safer than action \(b\) if for every strictly increasing and concave function \(\phi\), strictly increasing function \(u\), and belief \(\mu \in \Delta\), if \(\mathbb{E}_{\mu}u\left(a_\theta\right) \geq \mathbb{E}_{\mu}u\left(b_\theta\right)\), then \(\mathbb{E}_{\mu}\phi \circ u\left(a_\theta\right) \geq \mathbb{E}_{\mu}\phi \circ u\left(b_\theta\right)\). Let \(a \succeq_S b\) denote the binary relation \textit{``action \(a\) is safer than action \(b\).''} The strict relation, \(a \succ_S b\) denotes \(a \succeq_{S} b\) but \(b \not\succeq_{S} a\).

\subsection{Single-Crossing}\label{sec:singlecross}

A classical question concerns which lotteries are made comparatively more attractive as agents become more risk averse. This is similar in spirit to our safety notion but there is an important difference: the classical relation is between lotteries, which are random variables, whereas ours is between vectors of (state-dependent) payoffs.

Nevertheless, there is an intimate connection between these two concepts. Given belief \(\mu \in \Delta\), let \(X^\mu_a\) and \(X^\mu_b\) denote the real-valued random variables induced by the actions \(a\) and \(b\), with (respective) cumulative distribution functions (cdfs)
\[
F_a^{\mu}(v) \coloneqq \mu\left(\left\{\left.\theta \right| a_\theta \leq v\right\}\right),
\quad \text{and} \quad
F_b^{\mu}(v) \coloneqq \mu\left(\left\{\left.\theta \right| b_\theta \leq v\right\}\right).
\]
We introduce a familiar definition:
\begin{definition}\label{singlecrossdef}
    \(F^{\mu}_{a}\) \emph{single-crosses} \(F^{\mu}_{b}\) from below, \(F^{\mu}_{a} \succeq_{sc} F^{\mu}_{b}\), if there exists a \(\bar{v} \in \mathbb{R}\) such that for all \(v \in \mathbb{R}\), \(v < \bar{v}\) implies \(F^{\mu}_{b}(v) \geq F^{\mu}_{a}(v)\) and \(v \geq \bar{v}\) implies \(F^{\mu}_{a}(v) \geq F^{\mu}_{b}(v)\).
\end{definition}

Then, \cite*{hammond1974simplifying}, \cite*{lambert1979attitudes}, \cite*{karlin1963generalized}, and \cite*{jewitt1987risk} all contain the following result that highlights the importance of single-crossing in understanding risk aversion. From \cite*{jewitt1987risk},
\begin{theorem}\label{single-crossingthm}
    Let \(u\) be strictly increasing and \(F^{\mu}_{a} \succeq_{sc} F^{\mu}_{b}\). Then, \[\int u(v) dF^{\mu}_{a}\left(v\right) \geq \int u(v) dF^{\mu}_{b}\left(v\right) \quad \Longrightarrow \quad \int \phi \circ u \left(v\right) dF^{\mu}_{a}\left(v\right) \geq \int \phi \circ u\left(v\right)  dF^{\mu}_{b}\left(v\right),\]
    whenever \(\phi\) is strictly increasing and concave.
\end{theorem}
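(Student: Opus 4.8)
The plan is to prove this by a direct integration-by-parts argument, reducing everything to the single-crossing structure of $F^x_a - F^x_b$. First I would recall the standard identity: for a suitably well-behaved increasing function $\psi$ with $\psi(v_{\min}) = 0$ (which we can arrange by subtracting a constant, since adding constants affects neither side of either inequality), one has $\int \psi \, dF^x_a - \int \psi \, dF^x_b = -\int \psi'(v)\,[F^x_a(v) - F^x_b(v)]\,dv$, using that the distributions have common compact support (payoffs lie in $\mathbb{R}_+$ and $A, \Theta$ are compact, so $u$ is bounded) and handling the boundary terms. Write $G(v) \coloneqq F^x_b(v) - F^x_a(v)$, so the hypothesis $F^x_a \succeq_{sc} F^x_b$ says $G \geq 0$ on $(-\infty, \bar v]$ and $G \leq 0$ on $[\bar v, \infty)$ for some crossing point $\bar v$.

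The key step is the following observation about the mean hypothesis. Applying the identity with $\psi(v) = v$ gives $\int v\, dF^x_a - \int v\, dF^x_b = \int G(v)\, dv$, so the premise $\int v\, dF^x_a \geq \int v\, dF^x_b$ is exactly $\int_{-\infty}^{\infty} G(v)\, dv \geq 0$. Now for increasing concave $\phi$, the identity gives $\int \phi\, dF^x_a - \int \phi\, dF^x_b = \int \phi'(v)\, G(v)\, dv$ (after normalizing $\phi$ by a constant); here $\phi'$ is nonnegative and nonincreasing. The trick is to split at $\bar v$ and bound $\phi'$ by the constant $\phi'(\bar v)$ in the right way on each piece: on $(-\infty, \bar v]$ we have $\phi'(v) \geq \phi'(\bar v)$ and $G(v) \geq 0$, so $\phi'(v) G(v) \geq \phi'(\bar v) G(v)$; on $[\bar v, \infty)$ we have $\phi'(v) \leq \phi'(\bar v)$ and $G(v) \leq 0$, so again $\phi'(v) G(v) \geq \phi'(\bar v) G(v)$. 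Summing, $\int \phi'(v) G(v)\, dv \geq \phi'(\bar v) \int G(v)\, dv \geq 0$, where the last inequality uses $\phi'(\bar v) \geq 0$ together with the mean premise. This yields the desired conclusion.

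I would present this cleanly by first stating the integration-by-parts lemma (noting common bounded support so all integrals are finite and boundary terms vanish after the normalization), then doing the two-line split argument. I expect the main obstacle to be purely technical rather than conceptual: justifying the integration-by-parts identity when $\phi$ is merely concave (hence only differentiable a.e., though it is absolutely continuous on compacts, which suffices) and when $F^x_a, F^x_b$ may have atoms (which is typical here, since $\theta \mapsto u(a,\theta)$ pushes forward $x$ and can easily produce point masses). One can handle atoms either by working with Lebesgue–Stieltjes integration throughout and being careful about the versions of the CDFs at jump points, or by a routine approximation argument smoothing $\phi$ and the distributions; I would remark that since the inequality is closed, it suffices to prove it for a dense class of smooth $\phi$ and then pass to the limit. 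Everything else is elementary, and the sign-chasing at the crossing point $\bar v$ is the heart of the matter.
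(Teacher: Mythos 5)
Your argument is correct, and it is essentially the standard proof of this classical result. Note, however, that the paper itself does not prove this theorem at all: it is imported verbatim from the literature (attributed to \cite{jewitt1987risk}, with \cite{hammond1974simplifying}, \cite{lambert1979attitudes}, and \cite{karlin1963generalized} also cited), so there is no in-paper proof to compare against. Your route --- write $G = F^x_b - F^x_a$, use integration by parts to convert both the mean premise and the conclusion into statements about $\int \psi' G$, and then bound $\phi'(v)G(v) \geq \phi'(\bar v)G(v)$ separately on either side of the crossing point --- is exactly the textbook argument underlying those references, and the sign-chasing at $\bar v$ is right: on $(-\infty,\bar v]$ you have $\phi' \geq \phi'(\bar v)$ and $G \geq 0$, on $[\bar v,\infty)$ you have $\phi' \leq \phi'(\bar v)$ and $G \leq 0$, so in both cases $\phi'G \geq \phi'(\bar v)G$ pointwise and $\int \phi' G \geq \phi'(\bar v)\int G \geq 0$. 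Your technical caveats are the right ones and are genuinely routine here: the common support is compact (since $u$ is bounded on the compact $A \times \Theta$), boundary terms cancel in the difference of the two Stieltjes integrals because $F^x_a$ and $F^x_b$ agree at the endpoints, a concave $\phi$ is locally Lipschitz so absolutely continuous with a nonincreasing (a.e.\ defined, or one-sided) derivative, and atoms are handled either by the Lebesgue--Stieltjes conventions or by the density-and-limits argument you sketch. Nothing is missing.
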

To reiterate, single-crossing is a distribution-specific property. It is a statement about cdfs--in the economic context, a characteristic of lotteries. Accordingly, the classical results are comparing specific distributions. In contrast, our concept of safety is prior-free--it is a statement merely about (state-dependent) payoffs. Our main result, Theorem \ref{thm:central}, will connect the two.

\section{Main Result}

We define \(\mathcal{A}\) to be the set of states in which \(a\) is strictly preferred to \(b\), \(\mathcal{B}\) to be the set of states in which \(b\) is strictly preferred to \(a\), and \(\mathcal{C}\) to be the set of states in which the DM is indifferent between \(a\) and \(b\): \[\mathcal{A} \coloneqq \left\{\theta \in \Theta \colon a_{\theta} \ > \ b_{\theta}\right\}, \quad \mathcal{B} \coloneqq \left\{\theta \in \Theta \colon a_{\theta} \ < \ b_{\theta}\right\}, \quad \mathcal{C} \coloneqq \left\{\theta \in \Theta \colon a_{\theta} \ = \ b_{\theta}\right\}\text{.}\]
\begin{theorem}\label{thm:central}
The following are equivalent:
\begin{enumerate}
    \item\label{bullet1} \(a \succeq_S b\).
    \item\label{bullet2} For any pair \((\theta, \theta') \in \mathcal{A} \times \mathcal{B}\), \(b_{\theta'} \geq a_\theta\) and \(a_{\theta'} \geq b_\theta\).
    \item\label{bullet3} For all \(\mu \in \Delta\), \(F_a^{\mu} \succeq_{sc} F_b^{\mu}\).
    \item\label{bullet4} For all \(\mu \in \Delta\), if \(\mathbb{E}_{\mu} a_\theta \geq \mathbb{E}_{\mu} b_\theta\), then there exists a lottery \(\widetilde{X}\) such that \(X^\mu_a\) first-order stochastically dominates (FOSD) \(\widetilde{X}\) and \(\widetilde{X}\) second-order stochastically dominates (SOSD) \(X^\mu_b\). Moreover, if \(\mathbb{E}_{\mu} a_\theta = \mathbb{E}_{\mu} b_\theta\), then \(X^\mu_a\) SOSD \(X^\mu_b\).
\end{enumerate}
\end{theorem}
\begin{proof}
    Several of these implications follow easily from known results. In particular, Theorems 4.A.6 and 4.A.22 in \citet{shaked2007stochastic} yield \ref{bullet3} \(\Longrightarrow\) \ref{bullet4}; with the equal-mean component of \ref{bullet4} following from Theorem 3.A.44 in \citet{shaked2007stochastic}. \ref{bullet4} \(\Longrightarrow\) \ref{bullet2} due to the equivalence of nested supports and second-order stochastic dominance for binary random variables, and Theorem \ref{single-crossingthm} gives \ref{bullet3} \(\Longrightarrow\) \ref{bullet1}. Accordingly, it remains to show \ref{bullet1} \(\Longleftrightarrow\) \ref{bullet2} \(\Longrightarrow\) \ref{bullet3}. We prove these lemmas, completing the proof of the theorem, in Appendix \ref{thm:centralproof}.
\end{proof}

\begin{figure}
    \centering
    \includegraphics[width=1\linewidth]{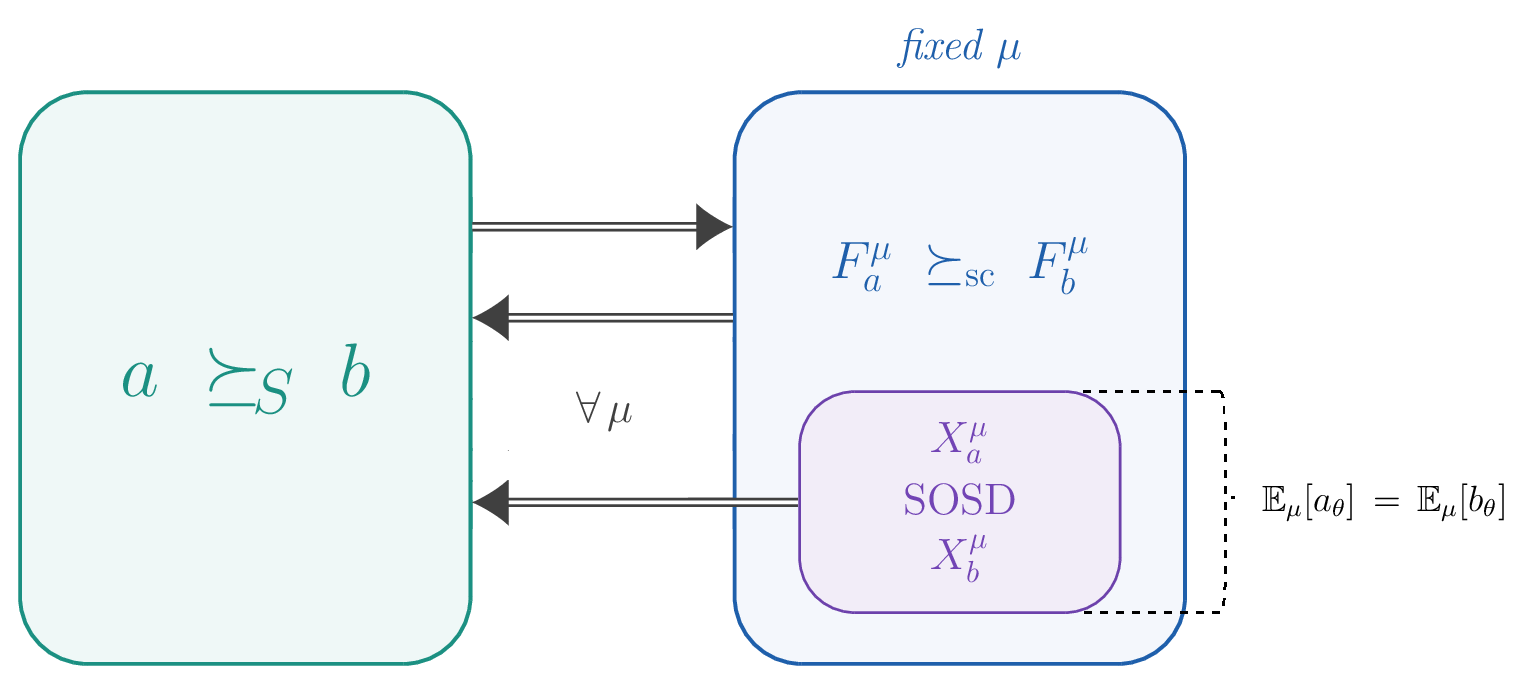}
    \caption{Theorem \ref{thm:central}. Connecting safety to SOSD and single-crossing.}
    \label{fig:thm31}
\end{figure}

Let us consider each piece of Theorem \ref{thm:central}. The equivalence Property \ref{bullet1} \(\Longleftrightarrow\) Property \ref{bullet2} states that $a$ is safer than $b$, if and only if the payoffs of $a$ lie within the convex hull of the payoffs of $b$. This is our principal characterization and we discuss its intuition in greater detail in \S\ref{ss:intuition} below. Properties \ref{bullet3} and \ref{bullet4} link the notion of safety to the classical ways of discussing risk with specific belief distributions, namely single-crossing and stochastic dominance. 

The implication Property \ref{bullet1} \(\Longrightarrow\) Property \ref{bullet3} says that if $a$ is safer than $b$, then a robust version of single-crossing holds: it must be satisfied \emph{for all} $\mu\in \Delta$. Its converse is exactly the classical results. Similarly, Property \ref{bullet4} connects safety to first- and second-order stochastic dominance \emph{for every} $\mu$ such that the expected monetary payoff from $a$ is higher than $b$. We visualize the connections of the items to each other in Figure \ref{fig:thm31}.

It is important to note that we have chosen one particular definition for ``safety.'' Namely, that the set of beliefs at which $a$ is preferred grows in risk aversion. While this is not the only possible definition for safety, we believe that it is the best way to robustly rank the riskiness of actions, given its strong connection to both single-crossing and stochastic dominance. When we discuss games in \S\ref{ss:games}, we connect our safety relation to risk dominance.

\subsection{Intuition}\label{ss:intuition}

Toward gaining intuition for the equivalence of \ref{bullet1} and \ref{bullet2} in Theorem \ref{thm:central}--the equivalence of \(a \succeq_S b\) and \(b_{\theta'} \geq a_\theta\) and \(a_{\theta'} \geq b_\theta\) for every pair of states \((\theta, \theta') \in \mathcal{A} \times \mathcal{B}\)--we briefly specialize to the binary state environment.

When there are two states, \(0\) and \(1\), for two actions \(a, b \in A\), denote \(\alpha_\theta \equiv u\left(a_\theta\right)\) and \(\beta_\theta \equiv u\left(b_\theta\right)\). As neither action dominates the other, and as we could just relabel the actions, we specify without loss of generality that \(a_0 > b_0\) and \(b_1 > a_1\) so that if the DM knows that the state is \(0\), she strictly prefers \(a\) and if she knows that the state is \(1\), she strictly prefers \(b\). 

Consider the DM's choice between actions \(a\) and \(b\). Let \(\mu \in \left[0,1\right]\) be the probability that the state is \(1\), so that she prefers action \(a\) if and only if
\[\left(1-\mu\right)\alpha_0 + \mu \alpha_1 \geq \left(1-\mu\right) \beta_0 + \mu\beta_1 \quad \Longleftrightarrow \quad \mu \leq \frac{\alpha_0 - \beta_0}{\alpha_0 - \beta_0 + \beta_1 - \alpha_1} \equiv \bar{\mu}\text{.}\]
Analogously, under the transformed utility, the indifference belief is
\[\hat{\mu} \equiv \frac{\phi\left(\alpha_0\right) - \phi\left(\beta_0\right)}{\phi\left(\alpha_0\right) - \phi\left(\beta_0\right) + \phi\left(\beta_1\right) - \phi\left(\alpha_1\right)}\text{.}\]
By definition, action \(a\) is safer than action \(b\) if it is chosen for more beliefs after the change in utility; that is, if \(\bar{\mu} \leq \hat{\mu}\). This translates to 
\[\tag{\(1\)}\label{eq:safer}
\frac{\phi(\beta_1) - \phi(\alpha_1)}{\beta_1 - \alpha_1} \leq \frac{\phi(\alpha_0) - \phi(\beta_0)}{\alpha_0 - \beta_0}\text{.}\]

What is the meaning of this condition? The right-hand side is the secant line to the concave \(\phi\) at points \(\alpha_0\) and \(\beta_0\); that is, the average slope of \(\phi\) between \(\beta_0\) and \(\alpha_0\). In other words, it is the marginal benefit from ``correctly" choosing \(a\) in state \(0\) (because \(a\) is better if the state is \(0\)) under the transformation \(\phi\). Therefore, if action \(a\) is safer than action \(b\), its marginal benefit in its ``correct" state is higher under the transformation, and it will be chosen more. 

Using Inequality \eqref{eq:safer}, we can characterize the ``safer than'' binary relation, \(\succeq_S\), exactly when there are two states:
\begin{proposition}\label{twostatesaferthanequiv}
    For actions \(a,b \in A\), \(a \succeq_S b\) if and only if \(b_1 \geq a_0\) and \(a_1 \geq b_0\).
\end{proposition}
\begin{proof}
\(\left(\Leftarrow\right)\) By the strict monotonicity of \(u\), \(b_1 \geq a_0\) and \(a_1 \geq b_0\) if and only if \(\beta_1 \geq \alpha_0\) and \(\alpha_1 \geq \beta_0\). Then, the Three-chord lemma (Theorem 1.16 in \cite*{phelps2009convex}) yields
    \[\label{ineq1}\tag{\(2\)}\frac{\phi\left(\alpha_0\right) - \phi\left(\beta_0\right)}{\alpha_0 - \beta_0} \geq \frac{\phi\left(\beta_1\right) - \phi\left(\beta_0\right)}{\beta_1 - \beta_0} \text{.}\]
    
    Likewise, \(\beta_1 > \alpha_1 \geq \beta_0\) plus the Three-chord lemma imply
    \[\label{ineq2}\tag{\(3\)}\frac{\phi\left(\beta_1\right) - \phi\left(\beta_0\right)}{\beta_1 - \beta_0} \geq \frac{\phi\left(\beta_1\right) - \phi\left(\alpha_1\right)}{\beta_1 - \alpha_1} \text{.}\]
    Combining Inequalities \ref{ineq1} and \ref{ineq2} yields Inequality \ref{eq:safer}.

\medskip

    \noindent \(\left(\Rightarrow\right)\) See Appendix \ref{proofsafe}.\end{proof}
Recalling that \(\alpha_0 > \beta_0\) and \(\beta_1 > \alpha_1\), we can rephrase this proposition as saying that \(a\) is safer than \(b\) if and only if the payoffs of \(a\) lie in the convex hull of the payoffs of \(b\). In other words, \(a \succeq_S b\) if and only if choosing \(a\) and being ``wrong" (because the state is actually \(1\)) is (weakly) not worse than choosing \(b\) and being ``wrong," while choosing \(a\) and being ``right" is not better than choosing \(b\) and being ``right." 

With this in mind, we can sketch an alternative, elementary, proof of the proposition: in the two-state setting, fixing \(u\), the lottery in utils induced by \(a\) and belief \(\mu\) at the indifference belief (given the utility function) between \(a\) and \(b\) is a mean-preserving contraction of that corresponding to \(b\) if and only if the payoffs to \(a\) lie in the convex hull of those according to \(b\). Accordingly, at this original indifference belief, the DM must continue to prefer \(a\) to \(b\) after any increase in risk aversion if and only if this convex-hull property is satisfied.

To extend Proposition \ref{twostatesaferthanequiv}'s binary-state finding to a general state space requires only the observation that it suffices to compare action by action and state by state along particular edges of \(\Delta\).\footnote{To elaborate, the set of beliefs on which $a$ is preferred to $b$ is the intersection of \(\Delta\) and a half-space (\(\mathbb{E}_\mu u(a_\theta) \geq \mathbb{E}_\mu u(b_\theta)\)). This is a convex set. Moreover, its extreme points are distributions that either put probability \(1\) on states in \(\mathcal{A}\) or make the DM indifferent between $a$ and $b$. The Krein-Milman theorem then allows us to simplify the setting to a collection of binary-state environments, as characterized in Proposition \ref{twostatesaferthanequiv}.} Specifically, consider states \(\theta \in \mathcal{A}\) and \(\theta' \in \mathcal{B}\). Comparing payoffs for only these two states, our result from Proposition \ref{twostatesaferthanequiv} applies directly so that, if the probability of all other states is 0, \(a\) is safer than \(b\) if and only if $a_{\theta} \leq b_{\theta'}$ and $b_{\theta} \leq a_{\theta'}$, as before. Doing this for every pair of states in which \(\theta \in \mathcal{A}\) and \(\theta' \in \mathcal{B}\) yields the result.

\subsection{Signed-Ratio Monotonicity}\label{sec:srm}
At a high level the safety relation is intimately connected to the signed-ratio monotonicity condition of \cite{quah2012aggregating}. Given a partially-ordered set \(\left(S,\geq\right)\) and a set of states of the world \(\Theta\), for distinct \(\theta, \theta' \in \Theta\), functions \(f_{\theta}(s)\) and \(f_{\theta'}(s)\), satisfy signed-ratio monotonicity if \(f_{\theta}\) and \(f_{\theta'}\) satisfy the single-crossing property in \(s\) and at any \(s'\) such that \(f_{\theta}(s^\prime) < 0 < f_{\theta'}(s^\prime)\), \[\label{srmono} \tag{\(4\)}s'' > s' \quad \Rightarrow \quad -\frac{f_{\theta}(s')}{f_{\theta'}(s')} \geq -\frac{f_{\theta}(s'')}{f_{\theta'}(s'')}\text{.}\] Signed-ratio monotonicity is an important property: as \cite{quah2012aggregating} show in their main theorem, \(\int f_{\theta}(s)d\mu(\theta)\) is single-crossing in \(s\) for any \(\mu\) if and only if \(f_{\theta}(s)\) and \(f_{\theta'}(s)\) satisfy signed-ratio monotonicity for all \(\theta, \theta' \in \Theta\).

What is the connection to safety? One can let \(\left(S,\geq\right)\) index risk aversion. Then, for \(a, b \in A\), defining \[f_{\tilde{\theta}}(s'') = u(b_{\tilde{\theta}}) - u(a_{\tilde{\theta}}), \quad \text{and} \quad f_{\tilde{\theta}}(s') = \phi \circ u(b_{\tilde{\theta}}) - \phi \circ u(a_{\tilde{\theta}}) \text{ for } \tilde{\theta} \in \left\{\theta,\theta'\right\}\text{,}\] and plugging them into the right-hand inequality in Expression \ref{srmono}, we obtain Inequality \ref{eq:safer}, which, as we show, holds if and only if the payoffs to \(a\) lie in the convex hull of \(b\)'s payoffs. In fact, signed-ratio monotonicity is simply \textit{a generalized notion of ``increased concavity,''} as shown in \cite{curelloetal}.

\section{Further Results}

\subsection{Beyond Expected Utility}\label{sec:beyond}

This connection to single crossing allows us to briefly examine safety outside of the expected-utility (EU) environment. We suppose that for any belief \(\mu \in \Delta\), the DM's preferences over actions are represented by a binary relation, \(\succeq\), that is a ``risk preference'' in the terminology of \cite*{maccheroniinsurancerisk} (i.e., is transitive and law invariant). We refer readers to this paper for further details: in particular, \cite{maccheroniinsurancerisk} note that this class includes \textit{all} probabilistically-sophisticated preferences. We further impose that the DM's preferences respect first-order stochastic dominance (FOSD).

We adapt their terminology in the following definition. Given action \(a\), recall that \(X^{\mu}_{a}\) denotes the lottery induced by belief \(\mu\). Then,
\begin{definition}
    A risk preference, \(\succeq\), is \textit{strongly risk averse} if for all \(a, b \in A\), if \(\mu\) is such that \(X^{\mu}_{a}\) second-order stochastically dominates (SOSD) \(X^{\mu}_{b}\), \(X^{\mu}_{a} \succeq X^{\mu}_{b}\). \(\succeq\) is \textit{risk neutral} if it admits an expected-value representation:
    \[X^{\mu}_{a} \succeq X^{\mu}_{b} \quad \Longleftrightarrow \quad \mathbb{E}_\mu a_\theta \geq \mathbb{E}_\mu b_\theta\text{.}\]
\end{definition}

Safer actions are precisely those that cannot become less enticing to a strongly risk-averse DM than a risk-neutral DM:
\begin{corollary}
    Action \(a\) is safer than action \(b\) if and only if a strongly risk-averse DM prefers \(a\) to \(b\) whenever a risk-neutral DM does.
\end{corollary}

\subsection{Monotone Decision Problems}\label{s:monotone}
We now specialize to a class of decision problems that are monotone in both states and actions. In these problems, larger states are preferred by the agent, and larger actions are (vaguely) better in larger states (single-crossing differences). Monotone problems arise in a number of contexts in economics including production and hiring by firms, investing (portfolio choice), and consumption-savings decisions. We find that not only does our concept of safety apply easily in these problems, but that the extra structure of monotonicity yields many desirable properties. 

We call a decision problem \(\left(A,\Theta,u\right)\) \textit{monotone} if \(A\) is endowed with a total order \(\trianglerighteq\), \(a_\theta\) is increasing in \(\theta\) for all \(a \in A\), and the \emph{single-crossing differences} property holds:
    \[a' \triangleright a \text{ and } a_{\theta}' \underset{(>)}{\geq} a_\theta \ \Rightarrow \ a_{\theta'}' \underset{(>)}{\geq} a_{\theta'} \text{ for any } \theta' \geq \theta\text{.}\]
Theorem \ref{thm:central} delivers the following equivalence result.
\begin{proposition}\label{bigequivalence}
    For monotone decision problems, \(a\) is safer than \(b\) if and only if \(b \trianglerighteq a\). Moreover, \(\succeq_S\) is transitive on monotone decision problems.
\end{proposition}
\begin{proof}
    \(a \succeq_S b\) if and only if for all \(\left(\theta,\theta'\right) \in \mathcal{A} \times \mathcal{B}\), \(b_{\theta'} \geq a_\theta > b_\theta\) and
    \(b_{\theta'} > a_{\theta'} \geq b_\theta\), which holds in monotone decision problems if and only if \(b \trianglerighteq a\). As \(\trianglerighteq\) is transitive, so too is \(\succeq_S\).
\end{proof}

    In fact, as \(\trianglerighteq\) totally orders \(A\), Proposition \ref{bigequivalence} implies\footnote{In Appendix \ref{noordersec}, we show that if we impose less structure on the state space, we can, nevertheless, produce refinements of safety that are transitive. Essentially, we generate the structure needed for transitivity via the refinements themselves.}
    \begin{corollary}
        For any monotone decision problem, \(\succeq_{S}\) totally orders the set of actions.
    \end{corollary}

\subsection{Learning and Incomplete Preferences}\label{learning}

Our relation allows for robust predictions when the DM is not assumed to have a single belief. Two situations in which this manifests are when the DM obtains information before taking an action, or when the DM has the incomplete preferences of \cite{bewley2002knightian}.

We explore information acquisition first and make two assumptions about the DM's behavior. First, we maintain our standing rationality assumption. Second, let \(k \colon \Delta \to [0,1]\) denote the DM's probability of choosing action \(a\) under utility \(u\) and \(\hat{k}\) her analogous choice rule under utility \(\hat{u}\). Given a binary set of actions \(A = \left\{a,b\right\}\), we say a DM's behavior is \textit{consistent} if for any two utility functions \(u\) and \(\hat{u}\) and for all \(\mu \in \Delta\), whenever \(\mathbb{E}_\mu u(a_\theta) = \mathbb{E}_\mu u(b_\theta)\) and \(\mathbb{E}_\mu \hat u(a_\theta) = \mathbb{E}_\mu \hat u(b_\theta)\), then the probability the DM selects action \(a\) at belief \(\mu\) under utility \(u\) equals the probability the DM selects action \(a\) at belief \(\mu\) under utility \(\hat{u}\). That is, consistency means that, holding fixed the DM's belief, her utility function does not alter how she resolves indifference between actions. Thus, consistency means that for all \(\mu \in \Delta\) such that \(\mathbb{E}_\mu u(a_\theta) = \mathbb{E}_\mu u(b_\theta)\) and \(\mathbb{E}_\mu \hat u(a_\theta) = \mathbb{E}_\mu \hat u(b_\theta)\), \(k(\mu) = \hat k(\mu)\).

Given a full-support prior \(\mu_0 \in \Delta\), information acquisition corresponds to a Borel probability distribution \(\Phi \in \Delta \Delta (\Theta)\) that is Bayes' plausible: \(\mathbb{E}_\Phi \mu = \mu_0\). Then, 
\begin{corollary}\label{c:learn}
    Fix a full-support prior \(\mu_0 \in \Delta\). Then, \(a \succeq_S b\) if and only if for any Bayes-plausible \(\Phi\), any utilities \(u\) and \(\hat{u}\) with \(\hat{u}\) more risk averse than \(u\), and any consistent, rational \(k\) and \(\hat{k}\), the more risk-averse DM chooses action \(a\) more frequently than \(b\):
    \(\mathbb{E}_\Phi k(\mu) \leq \mathbb{E}_\Phi \hat{k}(\mu)\).
\end{corollary}

This result highlights a unique feature of the concept of safety. Because it is distribution free, it can give predictions that are robust to \emph{any} learning process. Corollary \ref{c:learn} shows that safety guarantees that a more risk-averse DM chooses action $a$ more frequently in expectation, independent of signal structure or the precision of information. 

Next, our safety relation is intrinsically compatible with \cite{bewley2002knightian}'s model of incomplete preferences (see also \cite{ghirardato2003subjective}). In that model, a DM entertains a compact and convex set of probabilities \(M \subseteq \Delta\) and prefers action \(a\) to \(b\) if and only if \(\mathbb{E}_\mu u(a_\theta) \geq \mathbb{E}_\mu u(b_\theta)\) for all \(\mu \in M\). As the safety relation is itself defined by a set-inclusion ranking of sets of beliefs, Theorem \ref{thm:central} implies
\begin{corollary}
    If \(a\) is safer than \(b\), then if a Bewley DM prefers \(a\) to \(b\), she prefers \(a\) to \(b\) for any increase in risk aversion.
\end{corollary}

\subsection{Ambiguity}\label{smooth}

Our main result extends in a natural way to the smooth ambiguity model of \cite*{klibanoff2005smooth}. Suppose our DM prefers action \(a\) to action \(b\) if and only if 
\[\mathbb{E}_{\nu} \psi\left(\int u(a_\theta)d\pi\left(\theta\right)\right) \geq \mathbb{E}_{\nu} \psi\left(\int u(b_\theta)d\pi\left(\theta\right)\right)\text{,}\]
where \(\psi\) is a monotone concave function and \(\nu \in \Delta \Pi\) is a distribution over feasible probability measures \(\pi \in \Pi \subseteq \Delta \Theta\). We specify that \(\Pi\) is compact. 

Following \cite{klibanoff2005smooth}, our DM becomes more ambiguity averse if the internal von Neumann–Morgenstern utility \(u\) stays unchanged and \(\psi\) transforms to \(\phi \circ \psi\), for some monotone concave function \(\phi\).

Understanding \(\psi\left(\int u(a_\theta)d\pi\left(\theta\right)\right)\) as a concave functional \(\psi \colon A \times \Delta \Theta \to \mathbb{R}_+\), we define \(\mathcal{A} \subset \Delta \Theta\) to be the set of priors at which \(a\) is uniquely optimal and \(\mathcal{B} = \Delta \Theta \setminus \mathcal{A}\) to be the set in which \(b\) is uniquely optimal. We further define \[\alpha_{\pi} \coloneqq \psi\left(\int u(a_\theta)d\pi\left(\theta\right)\right) \quad \text{and} \quad \beta_{\pi} \coloneqq \psi\left(\int u(b_\theta)d\pi\left(\theta\right)\right)\text{;}\]
then, applying Theorem \ref{thm:central}, obtain
\begin{proposition}
Action \(a\) is safer than action \(b\) if and only if for each \(\pi \in \mathcal{A}\) and \(\pi' \in \mathcal{B}\), \(\beta_{\pi'} \geq \alpha_{\pi}\) and \(\alpha_{\pi'} \geq \beta_{\pi}\).
\end{proposition}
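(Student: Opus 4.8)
The plan is to deduce this from Theorem \ref{compactsaferthan} by recognizing the smooth-ambiguity problem as a special case of the general decision problem already analyzed. First I would set up the auxiliary decision problem whose ``states'' are the feasible priors: take the state space to be $\Pi$, keep the action set $A$, and define the reduced-form utility $u^{*}\colon A \times \Pi \to \mathbb{R}_{+}$ by $u^{*}(a,\pi) := \psi\!\left(\int u(a,\theta)\,d\pi(\theta)\right)$, so that $u^{*}(a,\pi) = \alpha_{\pi}$ and $u^{*}(b,\pi) = \beta_{\pi}$. A ``belief'' in this auxiliary problem is exactly an element $\nu \in \Delta\Pi$, and the DM's evaluation of action $a$ under $\nu$ is $\mathbb{E}_{\nu}\,u^{*}(a,\pi) = \mathbb{E}_{\nu}\,\psi\!\left(\int u(a,\theta)\,d\pi(\theta)\right)$, which is precisely the smooth-ambiguity objective. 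The crucial observation is that the Klibanoff--Marinacci--Mukerji notion of becoming more ambiguity averse, $\psi \mapsto \phi \circ \psi$ with $\phi$ monotone concave, acts on the auxiliary utility exactly as $u^{*} \mapsto \phi \circ u^{*}$; that is, it is nothing other than ``becoming more risk averse'' in the auxiliary problem. Hence the inclusion $P_{a,b}(a) \subseteq \hat{P}_{a,b}(a)$ holds for every monotone concave $\phi$ in the smooth-ambiguity model if and only if the same inclusion holds in the auxiliary problem, i.e.\ $a \succeq_{S} b$ in the one sense if and only if $a \succeq_{S} b$ in the other.

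Second, I would check that the auxiliary problem meets the maintained hypotheses of Theorem \ref{compactsaferthan}. Since $\Theta$ is compact and metrizable, $\Delta\Theta$ is compact and metrizable in the weak$^{*}$ topology, and $\Pi$, being compact, inherits these properties; $A$ is still a compact subset of $\mathbb{R}$ with $\left|A\right| \geq 2$; $u^{*}$ is nonnegative by construction; and $u^{*}$ is continuous in the action $a$ because $\psi$ is continuous and $a \mapsto \int u(a,\theta)\,d\pi(\theta)$ inherits continuity in $a$ from $u$. Genericity---the DM strictly prefers one of the two actions at each ``state'' $\pi$---and the absence of weakly dominated actions carry over as the analogues of the model's standing assumptions, and as usual the pairwise comparison of $a$ and $b$ is read relative to $\{a,b\}$, with the degenerate case $\mathcal{B} = \emptyset$ making both $a \succeq_{S} b$ and the displayed inequalities hold vacuously. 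With $\mathcal{A}$ and $\mathcal{B}$ as defined in the text (the sets of priors at which $a$, respectively $b$, is uniquely optimal), Theorem \ref{compactsaferthan} applied to the auxiliary problem yields immediately that $a \succeq_{S} b$ if and only if $\beta_{\pi'} \geq \alpha_{\pi}$ and $\alpha_{\pi'} \geq \beta_{\pi}$ for all $\pi \in \mathcal{A}$ and $\pi' \in \mathcal{B}$.

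The substantive content is entirely inherited from Theorem \ref{compactsaferthan}; the only thing to get right is the translation dictionary (priors $\leftrightarrow$ states, $\nu \leftrightarrow$ belief, $\psi(\int u \, d\pi) \leftrightarrow$ per-state utility, more ambiguity averse $\leftrightarrow$ more risk averse) and the verification that no hypothesis is lost in passing to the auxiliary problem. The main---and admittedly minor---obstacle I anticipate is bookkeeping around the domain of the priors: one must take the auxiliary state space to be $\Pi$ (equivalently, read $\mathcal{A}$ and $\mathcal{B}$ as subsets of $\Pi$), since the DM never averages over priors outside $\Pi$, and, relatedly, confirm the continuity of $a \mapsto \int u(a,\theta)\,d\pi(\theta)$ under whatever boundedness or integrability is implicit in the model. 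Neither point requires any idea beyond those already used for Theorem \ref{compactsaferthan}.
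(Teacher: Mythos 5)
Your proposal is correct and is essentially the paper's own argument: the paper likewise treats the priors $\pi \in \Pi$ as the states of an auxiliary decision problem with per-state utility $\alpha_{\pi} = \psi\left(\int u(a,\theta)\,d\pi(\theta)\right)$, identifies increased ambiguity aversion ($\psi \mapsto \phi \circ \psi$) with a concave transformation of that utility, and invokes Theorem \ref{compactsaferthan} directly. You merely spell out the verification of the theorem's hypotheses, which the paper leaves implicit.
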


Comparative smooth ambiguity aversion corresponds to a concave transformation of an object, just like comparative risk aversion. Not so for comparative ambiguity aversion for a maxmin DM. There, a DM becomes more ambiguity averse if her utility function over actions remains unchanged, but the set of beliefs over which she evaluates her expected utility (pessimistically) grows in a set-inclusion sense \citep{borie2023maxmin}. It is easy to see that ``more-ambiguity-averse'' is incompatible with safety: by introducing new feasible beliefs over which the safe action is especially bad, a less ambiguity-averse DM may prefer a safer action than a more ambiguity-averse DM.

\section{Applications}

We now scrutinize four classic settings through the lens of safety: binary-contribution games (\S\ref{ss:games}), asset-backed securities with heterogeneous beliefs and information acquisition (\S\ref{security}), insurance (\S\ref{ss:insurance}), and robust hedging (\S\ref{hedge}).

\subsection{Games}\label{ss:games}

\subsubsection{Contribution Games}
The safety relation is useful in games in which beliefs are endogenous equilibrium objects, giving comparative statics new to the literature. We consider a contribution/participation game, with examples like voting, whistleblowing, public good provision, and revolts, in which players' payoffs depend on how many others contribute. Specifically, we examine the following abstract, anonymous game with \(n\ge2\) players. Each player chooses an action \(a_i\in\left\{0,1\right\}\). Let \(m\in\left\{0,1,\dots,n-1\right\}\) denote the number of players choosing action \(1\). For each \(a\in\left\{0,1\right\}\), let \(\nu_a\colon\left\{0,1,\dots,n-1\right\}\to\mathbb R\) be the monetary payoff from choosing \(a\) against \(m\). Players evaluate monetary payoffs according to a strictly-increasing utility function \(u\) pre-transformation and \(\hat{u} = \phi \circ u\) post-transformation (where \(\phi\) is concave and strictly increasing).

For each \(p\in\left[0,1\right]\), let \(M_p\sim\mathrm{Bin}\left(n-1,p\right)\) denote the random number of other players choosing \(1\) when each of the other \(n-1\) players chooses \(1\) independently with probability \(p\).
Define
\[
V_u(a,p)\coloneqq \mathbb E\left[u\left(\nu_a\left(M_p\right)\right)\right],
\qquad \text{and} \qquad
\Delta_u(p)\coloneqq V_u(1,p)-V_u(0,p),
\]
and likewise \(V_{\hat u}(a,p)\) and \(\Delta_{\hat u}(p)\).

Define the best-reply region for action \(1\), \(S_u\coloneqq\left\{p\in\left[0,1\right]\colon \Delta_u(p)\ge0\right\}\), and the maximal symmetric equilibrium mixing probabilities
\[
\bar{p}(u)\coloneqq \sup\left(S_u\cup\left\{0\right\}\right),
\qquad \text{and} \quad
\bar{p}(\hat u)\coloneqq \sup\left(S_{\hat u}\cup\left\{0\right\}\right).
\]

\begin{proposition}\label{prop:safety_max_symm}
If action \(1\) is safer than action \(0\), then \(S_u\subseteq S_{\hat u}\), so \(\bar{p}(\hat u)\ge \bar{p}(u)\). Moreover, \(\bar{p}(u)\) (\(\bar{p}(\hat u)\)) is the maximal symmetric Nash equilibrium mixing probability under \(u\) (\(\hat u\)).
\end{proposition}
\begin{proof}
See Appendix \ref{a:contribution}.
\end{proof}

Before we consider how this result applies to some examples, it is helpful to put it in context within the contribution games literature. Papers in this field typically characterize equilibrium participation probabilities and study comparative statics of these probabilities with respect to participation cost, number of players, or the complementarity of actions (e.g., \cite{palfrey1984participation}, \cite{palfrey1985voter}, \cite{bagnoli1989provision}). In addition, the vast majority of contribution models assume risk neutrality.\footnote{There are some exceptions. \cite{teyssier2012inequity} considers constant relative risk aversion in a sequential public goods game.} In contrast, our concept of safety allows us to ask a comparative statics question largely absent from the literature: how does equilibrium participation change as risk aversion increases? Or alternatively, how robust is equilibrium participation to risk aversion? Proposition \ref{prop:safety_max_symm} tells us that if action 1 is safer, then increased risk aversion expands the set of beliefs supporting participation.  

We apply this result to two specific games. First, take a majority voting game with a contingent commitment cost. There are \(n\) players voting on a proposal, on which a majority of players must vote for it to pass. Interpret action \(1\) as ``vote yes and commit'' and action \(0\) as ``vote no,'' and the voting threshold is \(k\coloneqq\left\lfloor n/2\right\rfloor+1\). Let \(B\) be the payoff to each player if the proposal fails and \(G>B\) the payoff if it passes. In addition, let \(c\in\left(0,G-B\right)\), where \(c\) is an implementation cost incurred if and only if the proposal passes by those who voted yes. The proposal is a public good in the sense that everyone is better off if it is passed, but all players would rather receive the benefit without having to implement the policy themselves. Thus, if \(m\) other players vote \(1\), the monetary payoff functions are
\[
\nu_1(m)=
\begin{cases}
G-c,&m\ge k-1,\\
B,&m\le k-2,
\end{cases}
\qquad \text{and} \qquad
\nu_0(m)=
\begin{cases}
G,&m\ge k,\\
B,&m\le k-1.
\end{cases}
\]
\[\mathcal{A} = \left\{m \colon m = k-1\right\}, \quad \mathcal{B} = \left\{m \colon m \geq k\right\} \quad \text{and} \quad \mathcal{C} = \left\{m \colon m \leq k-2\right\}.\]
Thus,
\begin{corollary}\label{prop:majority_contingent_safer}
Action \(1\) is safer than action \(0\). Consequently, \(\bar{p}(\hat u)\ge \bar{p}(u)\).\end{corollary}

Next, let us look at whistleblowing. Suppose that if at least one player chooses action \(1\) (``report''), an investigation occurs and all players receive benefit \(G\); if no one reports, all players receive \(B\), where \(G>B\). A reporter pays a private cost \(c\in\left(0,G-B\right)\) regardless of what others do.
Thus, if \(m\) other players report,
\[
\nu_1(m)=G-c \text{ for all }m,
\qquad
\nu_0(0)=B,\qquad \text{and} \qquad \nu_0(m)=G \quad \text{for }m\ge1.
\]
\begin{corollary}
Action \(1\) is safer than action \(0\). Consequently \(\bar{p}(\hat u)\ge \bar{p}(u)\).
\end{corollary}

In both of these examples, the set of beliefs at which the costly, but socially beneficial action is chosen expands with increased risk aversion. 

\subsubsection{Risk Dominance}

An important classical notion of comparative riskiness of actions is \textit{risk dominance}. In symmetric (\(2 \times 2\)) coordination games with two pure-strategy Nash equilibria, one equilibrium is risk dominant if and only if the action players coordinate on is optimal under a uniform prior over the other player's action. Equivalently, in a risk-dominant equilibrium, the players play the action that is optimal for \textit{the majority} (\(\geq 50\)\%) of beliefs.

The risk-dominance literature attempts to determine which of multiple equilibria is more robust to strategic uncertainty by providing a selection criterion. The seminal result of \cite{carlsson1993global} is that in \(2 \times 2\) global games (with two strict Nash equilibria) vanishing noise yields unique selection of the risk-dominant equilibrium by iterated deletion of dominant strategies.\footnote{See also, e.g., \cite{morris1995p,morris1998unique}.} It does not, however, tell us whether this equilibrium selection is robust to changes in risk aversion. 

Consider the following symmetric \(2\times 2\) game with actions \(\{a,b\}\) and the following (monetary) payoffs.
\[
\begin{array}{c|cc}
 & a & b\\\hline
a & (\alpha,\alpha) & (\beta,\gamma)\\
b & (\gamma,\beta) & (\delta,\delta)
\end{array}
\]
Let $\alpha=16$, $\delta=8$, $\beta=0$, and $\gamma=7$.
Risk dominance compares the expected payoffs if player \(1\) believes that player \(2\) plays each action with probability \(1/2\). If agents are risk neutral, then the expected payoff from \(a\) is 8, while the expected payoff from \(b\) is \(7.5\), making equilibrium \((a,a)\) risk dominant. If, however, players' utilities are $u(x) = \sqrt{x}$, then the expected utility from \(a\) is 2, while from \(b\) it is \(2.737\), so that \((b,b)\) is now risk-dominant. 

Safety identifies precisely when this type of reversal cannot occur. Again consider the game above with either $\alpha > \gamma$ and $\delta > \beta$, or $\alpha < \gamma$ and $\delta < \beta$ so that there are two strict equilibria. Now, however, adjust the payoffs so that \(a \succeq_S b\). If \((a,a)\) is risk-dominant when players have some utility \(u\), then \((a,a)\) remains risk-dominant for any increase in players' risk aversion. In other words, safety provides a condition on payoffs under which equilibrium selection is robust to increased concavity of preferences. 

A natural conjecture--related to \cite{hellwig2002imperfect}--is that in such games the selection result can be extended to allowing two dimensions of uncertainty: both about the material payoffs and the players' attitudes toward risk.\footnote{See also \cite{cabrales2007equilibrium,heinemann2009measuring}.} In fact, it is easy to characterize exactly when two strict Nash equilibria plus risk-dominance imply safer-than. Recall that in order to have two strict Nash equilibria, we need 
\[
\left(\alpha>\gamma \text{ and } \beta<\delta\right)\ \text{or}\ \left(\alpha<\gamma \text{ and } \beta>\delta\right).
\]
Next note that if action \(a\) is risk dominant, it must be that \(\alpha+\beta\geq \gamma+\delta\). Then, the implication
\[
\left[\left(\alpha+\beta\geq \gamma+\delta\right)\ \text{and}\ \left(\left(\alpha>\gamma \text{ and } \beta<\delta\right)\ \text{or}\ \left(\alpha<\gamma \text{ and } \beta>\delta\right)\right)\right]\Rightarrow \left(a\succeq_S b\right)
\]
holds if and only if
\[
\left[\left(\alpha>\gamma \text{ and } \beta<\delta\right)\Rightarrow \delta\geq \alpha\right]\ \text{and}\ \left[\left(\alpha<\gamma \text{ and } \beta>\delta\right)\Rightarrow \gamma\geq \beta\right].
\]

\subsection{Safe Securities}\label{security}

Consider a firm that is selling state-contingent securities to investors in order to raise capital. In state $\theta \in [0,1]$, the firm receives cash flow \(\theta\) and pays the promised security $S(\theta)$ to the investor, with the distribution of states denoted $\mu \in \Delta([0,1])$. These securities can be structured in a variety of ways, and many papers in the literature examine which type of security is best for either the firm to sell or the investor to offer, given the specific context. We use the tools that we have developed to compare the robustness to risk aversion of securities. 

While there is an abundance of ways to structure the state-contingent payoff $S(\theta)$, here are three of the most common:
\begin{enumerate}[noitemsep,topsep=0pt]
    \item Equity: the investor receives a constant portion, $\eta \in (0,1)$, of the cash flow, or $S(\theta) = \eta \theta$,
    \item Debt: the investor is owed a debt $d \in (0,1)$ and collects as much of it as possible, or $S(\theta) = \min\{\theta,d\}$; and 
    \item Call Option: the investor gets a call option with a strike price of $\rho \in (0,1)$, or $S(\theta) = \max\{\theta - \rho, 0 \}$. 
\end{enumerate}

Following the literature--see, e.g., \cite*{nachman1994optimal} and \cite*{demarzo2005bidding}-- we assume that any security $S(\theta)$ satisfies the following properties:
\begin{enumerate}[label={(\roman*)},noitemsep,topsep=0pt]
    \item Monotonicity I: \(S\) (the investor's share of cash flow) is nondecreasing in \(\theta\);
    \item Monotonicity II: \(\theta - S\) (the firm's share of cash flow) is nondecreasing in \(\theta\);
    \item Limited liability: \(0 \leq S\left(\theta\right) \leq \theta\).
\end{enumerate}

Take two securities, \(S_a\) and \(S_b\), and assume that neither is weakly dominated: there exist realizations of the random cash flow under which each is strictly preferred (\textit{ex post}) to the other. By Theorem \ref{thm:central}, security \(S_a\) is safer than \(S_b\) if and only if for all \(\theta \in \mathcal{A}\) and \(\theta' \in \mathcal{B}\), \(S_b\left(\theta'\right) \geq S_a\left(\theta\right)\) and \(S_a\left(\theta'\right) \geq S_b\left(\theta\right)\). Recalling our definition of single-crossing from \S\ref{sec:singlecross}, we have
\begin{proposition}\label{securitiessafety}
    \(S_a \succeq_S S_b\) if and only if \(S_b \succeq_{sc} S_a\).
\end{proposition}
\begin{proof}
    \(\left(\Leftarrow\right)\) Let \(S_b \succeq_{sc} S_a\). Then for all \(\theta \in \mathcal{A}\) and \(\theta' \in \mathcal{B}\),
    \(S_b\left(\theta'\right) > S_a\left(\theta'\right) \geq S_a\left(\theta\right)\)
    and
    \(S_a\left(\theta'\right) \geq S_a\left(\theta\right) > S_b\left(\theta\right) \),
    where we used \(S_a\)'s monotonicity.

    \bigskip

    \noindent \(\left(\Rightarrow\right)\) Suppose for the sake of contraposition \(S_b \not\succeq_{sc} S_a\). This means there exist \(\theta_1 < \theta_2\) such that \(S_a\left(\theta_1\right) < S_b\left(\theta_1\right)\) and \(S_b\left(\theta_2\right) < S_a\left(\theta_2\right)\). By the monotonicity of \(S_b\), \(S_b\left(\theta_1\right) \leq S_b\left(\theta_2\right) < S_a\left(\theta_2\right)\), so \(S_a \not\succeq_{S} S_b\).\end{proof}
    
    Debt yields the highest possible payoff for low states ($S(\theta) = \theta$), and as such, debt must single-cross any other security from above. Thus,
    \begin{corollary}
        Debt is safer than any other security.
    \end{corollary}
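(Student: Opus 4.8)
The plan is to reduce the corollary to the single-crossing characterization proved just above. Write $D(\theta) = \min\{\theta, d\}$ for the debt security and let $S_b$ be any other admissible security. By the preceding theorem, $D \succeq_S S_b$ is equivalent to $S_b \succeq_{sc} D$; and, in the cash-flow formulation, $S_b \not\succeq_{sc} D$ means there exist $\theta_1 < \theta_2$ with $D(\theta_1) < S_b(\theta_1)$ and $S_b(\theta_2) < D(\theta_2)$ (exactly the configuration used in the theorem's $(\Leftarrow)$ step). So it suffices to show that no such pair $(\theta_1,\theta_2)$ exists.

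First I would locate where $D(\theta_1) < S_b(\theta_1)$ is even possible. Limited liability gives $S_b(\theta_1) \le \theta_1$, while $D(\theta_1) = \theta_1$ for every $\theta_1 \le d$; hence $D(\theta_1) < S_b(\theta_1)$ forces $\theta_1 > d$, so that $D(\theta_1) = d$ and therefore $S_b(\theta_1) > d$. On the other side, $D(\theta_2) = \min\{\theta_2, d\} \le d$, so $S_b(\theta_2) < D(\theta_2)$ would require $S_b(\theta_2) < d$. But Monotonicity I of $S_b$ together with $\theta_1 < \theta_2$ yields $S_b(\theta_2) \ge S_b(\theta_1) > d$, a contradiction. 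Hence the offending pair cannot exist, $S_b \succeq_{sc} D$, and the theorem delivers $D \succeq_S S_b$. In the degenerate cases the theorem formally excludes --- one of $D, S_b$ weakly dominating the other --- the conclusion $D \succeq_S S_b$ holds trivially, since the relevant preference region is then all of $\Delta$, so nothing is lost.

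A few remarks I would fold in. Only Monotonicity I and limited liability are used; Monotonicity II of $\theta - S_b$ plays no role. The argument just formalizes the one-line intuition flagged before the statement: debt pays the maximal feasible amount $\theta$ in low states, so its payoff curve lies weakly above every competitor's until it (possibly) crosses once, from above. As for obstacles: there is no deep one here --- the only care needed is the bookkeeping at the kink of $D$ at $\theta = d$ (the observation that $D(\theta_1) < S_b(\theta_1)$ can only happen on the flat piece $\theta_1 > d$) and, more importantly, keeping the orientation of $\succeq_{sc}$ straight when passing between the security/cash-flow picture and the abstract single-crossing statement of the theorem. If one additionally wanted the strict ranking $D \succ_S S_b$, one would also have to verify $S_b \not\succeq_S D$, i.e., that $D$ does not single-cross $S_b$ from below; this holds unless $S_b$ coincides with debt almost everywhere, but the corollary as stated asserts only the weak relation.
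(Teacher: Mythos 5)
Your proposal is correct and follows the paper's own route: the paper derives the corollary directly from the preceding theorem ($S_a \succeq_S S_b$ iff $S_b \succeq_{sc} S_a$) together with the one-line observation that debt, paying the maximal feasible amount $\theta$ in low states, must be single-crossed from below by every other security. You merely formalize that observation (limited liability pins $S_b \le D$ on $[0,d]$, and Monotonicity I rules out a downward crossing on the flat piece), which is exactly the intended argument.
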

    Intuitively, a security is safer if it gives a constant payoff, regardless of the state. This is constrained, however, by limited liability ($S(\theta) \leq \theta$). A call option yields the lowest possible payoff for low states ($S(\theta) = 0$), and has the largest possible slope when it is non-constant. Consequently, by the intermediate value theorem, a call option must single-cross from below any other security. So,
    \begin{corollary}
        All securities are safer than a call option.
    \end{corollary}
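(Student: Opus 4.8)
The plan is to reduce the claim to a single-crossing statement about cash-flow functions via the theorem equating \(S_a \succeq_S S_b\) with \(S_b \succeq_{sc} S_a\), and then read that statement off the structural axioms. Fix an arbitrary security \(S\) and a call option \(S_c(\theta) = \max\{\theta - \rho, 0\}\). Under the standing assumption neither weakly dominates the other; if one did, the relevant set \(\mathcal{A}\) or \(\mathcal{B}\) would be empty and \(S \succeq_S S_c\) would be immediate from the definition. Since ``\(S\) is safer than the call option'' means \(S \succeq_S S_c\), the equivalence turns the goal into \(S_c \succeq_{sc} S\), which---by the contrapositive characterization used in that theorem's proof---amounts to showing there is no pair \(\theta_1 < \theta_2\) with \(S(\theta_1) < S_c(\theta_1)\) and \(S_c(\theta_2) < S(\theta_2)\).

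The heart of the argument is the behavior of \(g(\theta) := S_c(\theta) - S(\theta)\). On \([0,\rho]\) the call pays nothing, so \(g(\theta) = -S(\theta) \leq 0\) by limited liability. On \([\rho, 1]\) the call pays \(\theta - \rho\), so \(g(\theta) = (\theta - S(\theta)) - \rho\), which is nondecreasing in \(\theta\) by Monotonicity II. The two pieces agree at \(\theta = \rho\) (both give \(g(\rho) = -S(\rho) \leq 0\)), so \(g\) is nonpositive on an initial segment and nondecreasing afterwards; hence it changes sign at most once, and only from negative to positive, which is exactly \(S_c \succeq_{sc} S\). Spelling out the forbidden configuration: if \(S(\theta_1) < S_c(\theta_1)\) then \(g(\theta_1) > 0\) forces \(\theta_1 > \rho\), and then monotonicity of \(g\) on \([\rho,1]\) gives \(g(\theta_2) \geq g(\theta_1) > 0\) for every \(\theta_2 > \theta_1\), i.e. \(S_c(\theta_2) > S(\theta_2)\)---so the configuration never occurs. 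This is precisely the ``lowest payoff for low states / largest slope / intermediate value'' intuition noted in the text: \(S_c\) sits at zero on \([0,\rho]\) and then climbs at the maximal admissible rate (that of \(\theta\) itself), a rate Monotonicity II caps for every security, so once the call overtakes \(S\) it can never fall behind. (One could instead skip the equivalence theorem and verify Theorem \ref{compactsaferthan} directly: the single crossing orders \(\mathcal{A}\) before \(\mathcal{B}\), and then \(u(S_c(\theta')) \geq u(S(\theta))\) and \(u(S(\theta')) \geq u(S_c(\theta))\) for \(\theta \in \mathcal{A}\), \(\theta' \in \mathcal{B}\) follow from \(S\)'s monotonicity.)

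I do not anticipate a real obstacle: this is a short consequence of Monotonicity II and limited liability, dual to the earlier corollary that debt is safest. The only points that need care are the bookkeeping at the kink \(\theta = \rho\)---ensuring the \([0,\rho]\) piece and the nondecreasing \([\rho,1]\) piece assemble into a genuine single crossing, which they do because \(g(\rho) \leq 0\)---and the degenerate cases in which \(\mathcal{A}\) or \(\mathcal{B}\) is empty, where the conclusion holds trivially without invoking single crossing.
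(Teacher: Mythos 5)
Your proposal is correct and follows essentially the same route as the paper: the paper's argument is exactly the reduction to $S_c \succeq_{sc} S$ via the equivalence theorem, justified by the call option paying the least at low states and rising at the maximal admissible slope. Your analysis of $g = S_c - S$ (nonpositive on $[0,\rho]$ by limited liability, nondecreasing on $[\rho,1]$ by Monotonicity II) is just a careful formalization of the paper's one-sentence intermediate-value-theorem appeal.
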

    As observed by, for instance, \cite*{demarzo2005bidding}, \cite*{dang2013information}, and \cite*{inostroza2022optimal}, a security that crosses the other from below is more information sensitive: the value of information about the state is higher. We show that this precise means of comparison (signal-crossing from below), is the way to rank securities in terms of their robustness to risk aversion.

\subsection{Insurance}\label{ss:insurance}

Just as we ranked securities in terms of their robustness to risk aversion, we can do the same with insurance. Suppose that an agent incurs a random loss \(w \in \left[-1,0\right]\), so we define the state to be \(\theta = -w\). We define an (indemnity-schedule)\footnote{These definitions are borrowed from Definitions 6 and 8 in \cite{maccheroniinsurancerisk}.} insurance policy to be a nondecreasing function of \(\theta\), \(\iota\left(\theta\right)\), such that \(\iota\left(\theta\right) - \theta\) is nonincreasing in \(\theta\). Given a purchase of insurance, the agent's random monetary payoff is \(\iota(\theta) - \theta\).

Here are three common varieties of insurance. In each case \(p > 0\).
\begin{enumerate}[noitemsep,topsep=0pt]
    \item Full Insurance: the agent incurs no risk, i.e., $\iota(\theta) = \theta - p$.
    \item Proportional Insurance: the agent is reimbursed a constant fraction of her loss, i.e., $\iota(\theta) = (1-\eta) \theta - p$, with \(\eta \in \left[0,1\right)\).
    \item Deductible-limit Insurance: the agent is reimbursed all except deductible $\delta$, but only up to a limit $\lambda$, i.e., $\iota(\theta) = \min\left\{\left(\theta - \delta\right)^+,\lambda\right\} - p$, where \(\delta \in \mathbb{R}\) and \(\lambda \geq 0\).
\end{enumerate}
Consider two insurance policies, \(\iota_a\) and \(\iota_b\), neither of which dominates the other.  By Theorem \ref{thm:central}, insurance \(\iota_a\) is safer than \(\iota_b\) if and only if for all \(\theta \in \mathcal{A}\) and \(\theta' \in \mathcal{B}\), \(\iota_b\left(\theta'\right) - \theta' \geq \iota_a\left(\theta\right) - \theta\) and \(\iota_a\left(\theta'\right) - \theta' \geq \iota_b\left(\theta\right) - \theta\).

Recalling our definition of single-crossing from \S\ref{sec:singlecross}, we have
\begin{proposition}
    \(\iota_a \succeq_S \iota_b\) if and only if \(\iota_a(\theta) - \theta \succeq_{sc} \iota_b(\theta) - \theta\).
\end{proposition}
\begin{proof}
    This follows immediately from Proposition \ref{securitiessafety}, as the DM's payoff as a function of the state is formally equal to a security except it is now nonincreasing in the state rather than nondecreasing.
\end{proof}

A trivial corollary of this proposition is that full insurance is safer than any other insurance policy. Interestingly, any ranking of proportional and deductible-limit insurance according to safety is possible: proportional may be safer than deductible-limit or vice-versa, or they may be incomparable.

\subsection{Hedging}\label{hedge}
    Now we examine the classic question of constructing an asset portfolio to hedge against risk. To be concrete, suppose our DM's wealth, or other holdings, \(y\), fluctuates according to market conditions (the state \(\theta\)) and is distributed according to \(H_\theta\). She decides whether to add asset \(a\) or asset \(b\) to her portfolio to hedge against the unavoidable variation of \(y\). Asset \(a\) pays \(a_\theta\) in state \(\theta\) and asset \(b\) pays \(b_\theta\).

    We say that \emph{asset \(a\) hedges risk better than asset \(b\)} if any strongly risk-averse DM--whose preferences satisfy the other assumptions in \S\ref{sec:beyond}--prefers asset \(a\) to \(b\) whenever a risk-neutral DM does. Note that we choose to return to the set up of \S\ref{sec:beyond} and examine the question of hedging outside of the expected utility framework. As will become clear in our discussion after Proposition \ref{hedgeproof}, this allows us to have stronger results with fewer assumptions than previous literature. 

    For each belief \(\mu\in\Delta\), let \(X^{\mu}_{a}\) and \(X^{\mu}_{b}\) denote the aggregate wealth lotteries induced by adding assets \(a\) and \(b\), respectively: \(\theta\) is drawn according to \(\mu\), then \(y\) is drawn according to \(H_\theta\), and \(X^{\mu}_{a}=a_\theta+y\) while \(X^{\mu}_{b}=b_\theta+y\). Let \(\mathcal{A}\coloneqq\left\{\theta\in\Theta\colon a_\theta>b_\theta\right\}\), \(\mathcal{B}\coloneqq\left\{\theta\in\Theta\colon b_\theta>a_\theta\right\}\), and \(\mathcal{C}\coloneqq\left\{\theta\in\Theta\colon a_\theta=b_\theta\right\}\). For each \(\left(\theta,\theta'\right)\in\mathcal{A}\times\mathcal{B}\), define
    \[
        \lambda_{\theta,\theta'}\coloneqq\frac{a_\theta-b_\theta}{\left(a_\theta-b_\theta\right)+\left(b_{\theta'}-a_{\theta'}\right)}\in\left(0,1\right)\text{,}
    \]
    and let \(\mu_{\theta,\theta'}\in\Delta\) denote the belief supported on \(\left\{\theta,\theta'\right\}\) such that \(\mu_{\theta,\theta'}\left(\theta'\right)=\lambda_{\theta,\theta'}\). Thus,
    \begin{proposition}\label{hedgeproof}
        Asset \(a\) hedges risk better than asset \(b\) if and only if for all \(\left(\theta,\theta'\right)\in\mathcal{A}\times\mathcal{B}\), the aggregate wealth lottery \(X^{\mu_{\theta,\theta'}}_{a}\) SOSD \(X^{\mu_{\theta,\theta'}}_{b}\).
    \end{proposition}
    
    Proposition \ref{hedgeproof} provides an exact characterization of robust hedging in terms of a collection of two-state SOSD comparisons. The classical literature--e.g., \cite*{kihlstrom1981risk}, \cite*{ross1981some}, \cite*{nachman1982preservation}, \cite*{jewitt1987risk}, and \cite*{eeckhoudt1996changes}--studies a similar question, asking when \[\mathbb{E}_{F^{\mu}_{a},H}u\left(Y_a+V\right)\ge\mathbb{E}_{F^{\mu}_{b},H}u\left(Y_b+V\right)\quad\Rightarrow\quad\mathbb{E}_{F^{\mu}_{a},H}\phi\circ u\left(Y_a+V\right)\ge\mathbb{E}_{F^{\mu}_{b},H}\phi\circ u\left(Y_b+V\right)\text{,}\]
for any monotone concave \(\phi\), where random variable \(V\) (with cdf \(H\)) is the DM's random wealth and \(Y_a\) and \(Y_b\) (with respective cdfs \(F^{\mu}_{a}\) and \(F^{\mu}_{b}\)) are the random payoffs from assets \(a\) and \(b\), respectively. 

Even restricting to the expected-utility case, our result is different from these works in three essential ways: first, the aforementioned papers are concerned with the properties of random variables that lead to one asset being more sensitive to an investor's risk preferences. In contrast, ours is a distribution-free comparison between assets. Second, the papers in the literature stipulate that \(Y_i\) and \(V\) are independent. We assume no such independence here. Third, our robust approach allows for an equivalence result, while the results in the literature provide sufficient conditions. Of course, the preferences valid for our finding are also much more general than just expected utility.

\section{Discussion}

This paper contributes to the sizable literature evaluating decision-making under risk, by studying the actions in a decision problem when the uncertainty is subjective (or unknown). That is, rather than fixing a belief and comparing the induced lotteries over outcomes--as is standard (at least implicitly) in the literature--we ask when one action becomes more attractive than another across an expanding set of beliefs as the decision-maker becomes more risk averse.

Our main result equates this risk-averse-robust property with a simple collection of pairwise conditions on state-dependent payoffs as well as to a robust form of single-crossing. This begs the question: what is the benefit of our approach, when one could instead check the classical (single-crossing) condition pointwise, i.e., for every belief? Our answer is that, in a sense, we are doing exactly this: the linearity of expected utility means that the behavior of a DM can be summarized by her behavior at particular extreme beliefs, the binary distributions at which she is indifferent between the actions. Our main result, therefore, could be seen as equating this binary-state single-crossing condition with a simple convex-hull condition on payoffs.

Moreover, our robust approach offers significant advantages over the classical known-lotteries studies: our characterization requires knowledge only of the state-dependent monetary payoffs of actions. This means that our results readily apply to environments in which beliefs (uncertainty) are unspecified (even misspecified) or heterogeneous. For example, a principal selling insurance or asset-backed securities is surely able to specify the terms of the contract--the state-dependent payoffs--but may not know consumers' beliefs or specific risk-preferences beyond a lower or upper bound. Likewise, safety provides a tractable and meaningful way to conduct belief- and utility-robust comparative statics \citep{scs}.

\bibliography{sample.bib}

\appendix

\section{Omitted Proofs}\label{omittedproofs}

    \subsection{Theorem \ref{thm:central} Proof}\label{thm:centralproof}

To complete the proof, we prove two lemmas; Lemma \ref{lem:1equiv2}, which establishes \ref{bullet1} \(\Longleftrightarrow\) \ref{bullet2}; and Lemma \ref{lem:sconlyif} which gives us \ref{bullet2} \(\Longrightarrow\) \ref{bullet3}. 

\begin{lemma}\label{lem:1equiv2}
    Action \(a\) is safer than action \(b\) if and only if for each \(\theta \in \mathcal{A}\) and \(\theta' \in \mathcal{B}\), \(b_{\theta'} \geq a_{\theta}\) and \(a_{\theta'} \geq b_{\theta}\).
\end{lemma}
\begin{proof}
We start by constructing a number of objects generated when the DM's utility is \(u\), understanding their analogs for utility \(\hat{u} = \phi \circ u\) to be generated in the same manner.

We extend our two-state convention and write \(\alpha_\theta \coloneqq u\left(a_\theta\right)\) and  \(\beta_\theta \coloneqq u\left(b_\theta\right)\). Let \(H_{a,b}\) denote the hyperplane of indifference, the set of beliefs at which the DM is indifferent between \(a\) and \(b\): \[H_{a,b} \coloneqq \left\{\mu \in \Delta \colon \mathbb{E}_{\mu}\alpha_\theta = \mathbb{E}_{\mu}\beta_\theta\right\}\text{.}\]

For any \(\theta \in \Theta\), let \(v_{\theta}\) denote the corresponding vertex (as a point in the simplex). Furthermore, for any \(\theta \in \mathcal{A}\) and \(\theta' \in \mathcal{B}\), let \(e_{\theta, \theta'}\) denote the edge of \(\Delta\) ``between the two states:'' \[e_{\theta,\theta'} \coloneqq \left\{\mu \in \Delta \vert \exists \lambda \in \left(0,1\right) \colon \lambda v_{\theta} + \left(1-\lambda\right) v_{\theta'} = \mu\right\}\text{,}\] and let \(\mu_{\theta,\theta'}\) denote the point of indifference between the two actions lying on the edge \(e_{\theta,\theta'}\):
 \[\mu_{\theta,\theta'} \coloneqq \left\{\mu \in e_{\theta,\theta'}  \colon \mathbb{E}_{\mu}\alpha_\theta = \mathbb{E}_{\mu}\beta_\theta\right\}\text{.}\]
 Equivalently, \(\mu_{\theta,\theta'} = e_{\theta,\theta'} \cap H_{a,b}\).

\(P_{a,b}\left(a\right)\) is a convex set, so by the Krein-Milman theorem, it is the closed convex hull of its extreme points. Furthermore, its set of extreme points is the set \[\ubar{P} \coloneqq \left\{v_\theta \colon \theta \in \mathcal{A} \cup \mathcal{C}\right\} \cup \left\{\mu_{\theta,\theta'} \colon \theta \in \mathcal{A}, \ \theta' \in \mathcal{B}\right\}\text{,}\] i.e., the set of vertices at which \(a\) is uniquely optimal, and the points on the edges connecting the vertices at which \(a\) is uniquely optimal with the vertices at which \(b\) is uniquely optimal.

Now, \(P_{a,b}\left(a\right) \subseteq \hat{P}_{a,b}\left(a\right)\) if and only if \(\ubar{P} \subseteq \bar{\co} \hat{\ubar{P}}\), i.e., the set \(\ubar{P}\) lies within the closed convex hull of the set \(\hat{\ubar{P}}\). Moreover, by construction \(\ubar{P} \subseteq \bar{\co} \hat{\ubar{P}}\) if and only if for each \(\theta \in \mathcal{A}\) and \(\theta' \in \mathcal{B}\), there exists some \(\lambda \in \left(0,1\right]\) such that \(\mu_{\theta,\theta'} = \lambda \hat{\mu}_{\theta,\theta'} + \left(1-\lambda\right) v_{\theta}\). That is, for each edge containing a point of indifference, the indifference point for utility \(u\), \(\mu_{\theta,\theta'}\), must lie between the indifference point in decision problem for utility \(\hat{u}\) and the vertex \(v_{\theta}\).

By Proposition \ref{twostatesaferthanequiv}, this holds for any strictly monotone concave transformation of \(u\), \(\phi\), if and only if for all \(\theta \in \mathcal{A}\) and \(\theta' \in \mathcal{B}\), \(\beta_{\theta'} \geq \alpha_{\theta}\) and \(\alpha_{\theta'} \geq \beta_{\theta}\). \end{proof}

\begin{lemma}\label{lem:sconlyif}
If \(a \succeq_S b\), then \(F^{\mu}_{a} \succeq_{sc} F^{\mu}_{b}\) for all \(\mu \in \Delta\).
\end{lemma}
\begin{proof}
From \ref{bullet2} (for any pair \((\theta, \theta') \in \mathcal{A} \times \mathcal{B}\), \(b_{\theta'} \geq a_\theta\) and \(a_{\theta'} \geq b_\theta\)), \(\sup_{\theta \in \mathcal{A}} b_\theta \leq \inf_{\theta' \in \mathcal{B}} a_{\theta'} \leq \inf_{\theta' \in \mathcal B} b_{\theta'}\), and \(\sup_{\theta \in \mathcal{A}} a_\theta \leq \inf_{\theta' \in \mathcal{B}} b_{\theta'}\). Moreover, we recall that \(a_\theta > b_\theta\) for all \(\theta \in \mathcal{A}\), \(a_{\theta'} < b_{\theta'}\) for all \(\theta' \in \mathcal{B}\), and \(a_{\theta^{\dagger}} = b_{\theta^{\dagger}}\) for all \(\theta^{\dagger} \in \mathcal{C}\). 

Fix \(\mu \in \Delta\). For \(t \in \mathbb{R}\), we define \(D(t) \coloneqq F_a^{\mu}(t) - F_b^{\mu}(t)\). We also define \(\ubar{a} \coloneqq \inf_{\theta \in \mathcal{B}} a_\theta\) and \(\bar{a} \coloneqq \sup_{\theta \in \mathcal{A}} a_\theta\).

First, for any \(t < \ubar{a}\), we have \(\mu\left(\left\{\left.\theta \right| a_\theta \leq t\right\}\right) = \mu\left(\left\{\left.\theta \in \mathcal{A}\cup \mathcal{C} \right| a_\theta \leq t\right\}\right)\), since \(a_\theta \geq \ubar{a}\) for all \(\theta \in \mathcal{B}\).
On \(\mathcal{A}\cup\mathcal{C}\), we have \(b_\theta \leq a_\theta\), so
\(\left\{\left.\theta \in \mathcal{A}\cup\mathcal{C} \right| a_\theta \leq t\right\} \subseteq
\left\{\left.\theta \right| b_\theta \leq t\right\}\).
Hence, \(F_b^{\mu}(t) \geq F_a^{\mu}(t)\), i.e., \(D(t) \leq 0\), for all \(t < \ubar{a}\).

Second, for any \(t \geq \bar{a}\), we claim that
\(\left\{\left.\theta \right| b_\theta \leq t\right\} \subseteq \left\{\left.\theta \right| a_\theta \leq t\right\}\).
Indeed, if \(\theta \in \mathcal{B}\cup\mathcal{C}\), then \(a_\theta \leq b_\theta\), so \(b_\theta \leq t\) implies \(a_\theta \leq t\). If \(\theta \in \mathcal{A}\), then \(a_\theta \leq \bar{a} \leq t\).
Thus \(F_a^{\mu}(t) \geq F_b^{\mu}(t)\), i.e., \(D(t) \geq 0\), for all \(t \geq \bar{a}\).

Finally, \(\sup_{\theta \in \mathcal{A}} b_\theta \leq \ubar{a}\) and \(\bar{a} \leq \inf_{\theta \in \mathcal{B}} b_\theta\).
Therefore, on the open interval \(\left(\ubar{a},\bar{a}\right)\), the function \(D(\cdot)\) is weakly increasing. It follows that \(D(\cdot)\) crosses \(0\) at most once (and from below), which is exactly \(F_a^{\mu} \succeq_{sc} F_b^{\mu}\).\end{proof}

\subsection{Proposition \ref{twostatesaferthanequiv} Proof}\label{proofsafe}
\begin{proof} \(\left(\Rightarrow\right)\) By the strict monotonicity of \(u\) we may work directly with utils rather than outcomes; \textit{viz.,} \(\alpha\)s and \(\beta\)s rather than \(a\)s and \(b\)s. 

Suppose for the sake of contraposition that \(\beta_0 > \alpha_1\) (and recall \(\alpha_0 > \beta_0\)). There are two possibilities: either \(\alpha_0 \leq \beta_1\), or \(\alpha_0 > \beta_1\). 
    
    Suppose first \(\alpha_0 \leq \beta_1\), so \(\beta_1 \geq \alpha_0 > \beta_0 > \alpha_1\) and let \(\phi\left(y\right) = \min\left\{y, k y + c\right\}\),
    where 
    \[c = \frac{\beta_0 \left(\beta_0\beta_1 - \alpha_0 \alpha_1\right)}{\beta_0 \left(\alpha_1 - \beta_0\right)+ \alpha_0 \left(\beta_0 -2\alpha_1\right)+ \beta_0\beta_1} \ \text{and} \ k = \frac{\left(\alpha_0 -\beta_0\right)\left(\beta_0 - \alpha_1\right)}{\beta_0 \left(\alpha_1 - \beta_0\right)+ \alpha_0 \left(\beta_0 -2\alpha_1\right)+ \beta_0\beta_1}\text{.}\]
    It is straightforward to check that \(k \in \left(0,1\right)\) and \(k \beta_0 + c = \beta_0\), so \(\phi\) is weakly concave, as required. Moreover, 
    \[\frac{\phi\left(\beta_1\right) - \phi\left(\alpha_1\right)}{\beta_1 - \alpha_1} > \frac{\phi\left(\alpha_0\right) - \phi\left(\beta_0\right)}{\alpha_0 - \beta_0} \ \Leftrightarrow \ \frac{k \beta_1 + c - \alpha_1}{\beta_1-\alpha_1} - \frac{k \alpha_0 + c - \beta_0}{\alpha_0-\beta_0} > 0 \text{,}\]
    if and only if
    \[(1-k)\left(\beta_0 \beta_1 - \alpha_0 \alpha_1\right) - c \left(\beta_1 - \alpha_0 + \beta_0 - \alpha_1\right) > 0\text{,}\]
    which also holds.

    Finally, suppose \(\alpha_0 > \beta_1\), in which case we have
    \( \alpha_0 > \beta_1 > \alpha_1\) and \(\alpha_0 > \beta_0 > \alpha_1\). By the three-chord lemma, we have
    \[\frac{\phi\left(\beta_1\right) - \phi\left(\alpha_1\right)}{\beta_1 - \alpha_1} \geq \frac{\phi\left(\alpha_0\right) - \phi\left(\alpha_1\right)}{\alpha_0 - \alpha_1}\text{,}\]
    so it suffices to construct a concave monotone \(\phi\) for which
    \[\Psi \coloneqq \frac{\phi\left(\alpha_0\right) - \phi\left(\alpha_1\right)}{\alpha_0 - \alpha_1} - \frac{\phi\left(\alpha_0\right) - \phi\left(\beta_0\right)}{\alpha_0 - \beta_0} > 0\text{.}\] To that end, let \[\phi\left(y\right) = \min\left\{y, \frac{y + \beta_0}{2}\right\}\text{.}\]
    Plugging this in, we have
    \[\Psi = \frac{\frac{\alpha_0 + \beta_0}{2} - \alpha_1}{\alpha_0 - \alpha_1} -  \frac{\frac{\alpha_0 + \beta_0}{2} - \beta_0}{\alpha_0 - \beta_0} = \frac{\beta_0 - \alpha_1}{2\left(\alpha_0-\alpha_1\right)} > 0\text{,}\] as desired.\end{proof}

    \subsection{Corollary \ref{c:learn} Proof}

    \begin{proof}
    \((\Rightarrow)\) Immediate under consistency due to the set-inclusion ordering of beliefs corresponding to safety.

    \smallskip
    
    \noindent \((\Leftarrow)\) Suppose for the sake of contraposition that \(a\not\succeq_S b\). Then, following the proofs of Theorem \ref{thm:central} and Proposition \ref{twostatesaferthanequiv}, there exists some belief \(\mu'\) such that \(\mathbb{E}_{\mu'}\left[u(a_\theta)\right]>\mathbb{E}_{\mu'}\left[u(b_\theta)\right]\) and \(\mathbb{E}_{\mu'}\left[\hat u(a_\theta)\right]<\mathbb{E}_{\mu'}\left[\hat u(b_\theta)\right]\).

    Let
\[
\varepsilon\coloneqq\sup\left\{\eta\in(0,1)\colon\frac{\mu_0-\eta\mu'}{1-\eta}\in\Delta\right\},
\quad \text{and} \quad
\tilde{\mu}\coloneqq\frac{\mu_0-\varepsilon\mu'}{1-\varepsilon}.
\]
By the maximality of \(\varepsilon\), \(\tilde{\mu}\) lies on the boundary of \(\Delta\). Let \(\tilde{\Phi}\) be the Bayes-plausible binary distribution (over posterior beliefs) with support on \(\mu'\) and \(\tilde{\mu}\) that assigns probability \(\varepsilon\) to \(\mu'\) and \(1-\varepsilon\) to \(\tilde{\mu}\), so that \(\mathbb{E}_{\tilde{\Phi}}\mu=\varepsilon\mu'+(1-\varepsilon)\tilde{\mu}=\mu_0\). 

Let \(M\coloneqq\left\{\delta_\theta\colon\theta\in\Theta\right\}\), and let \(\Phi\) be the Bayes-plausible distribution obtained by refining \(\tilde{\Phi}\) as follows: conditional on posterior \(\tilde{\mu}\), the posterior is \(\delta_\theta\) where \(\theta\sim\tilde{\mu}\).\footnote{That is to say, let \(\iota \colon \Theta \to \Delta\) with \(\theta \mapsto_\iota \delta_\theta\). Then define \(\Phi \in \Delta \Delta(\Theta)\) by \(\Phi \coloneqq \varepsilon \delta_{\mu'} + (1-\varepsilon) (\tilde{\mu} \circ \iota^{-1})\).} Then \(\Phi\) has support on \(\left\{\mu'\right\}\cup M\) and satisfies
\[
\mathbb{E}_\Phi\mu=\varepsilon\mu'+(1-\varepsilon)\int_\Theta\delta_\theta\,d\tilde{\mu}(\theta)=\varepsilon\mu'+(1-\varepsilon)\tilde{\mu}=\mu_0.
\]
Under consistency and rationality, \(k(\mu)=\hat{k}(\mu)\) for all \(\mu\in M\), and \(k(\mu')=1>0=\hat{k}(\mu')\). We conclude that \(\mathbb{E}_\Phi\hat{k}(\mu)<\mathbb{E}_\Phi k(\mu)\).\end{proof}

\subsection{Proposition \ref{prop:safety_max_symm} Proof}\label{a:contribution} 
\begin{proof}
For each \(p\in\left[0,1\right]\), let \(\mu_p\) denote the induced distribution of \(M_p\) on \(\left\{0,1,\dots,n-1\right\}\).
By definition,
\[
\Delta_u(p)\ge0
\quad\Longleftrightarrow\quad
\sum_{m=0}^{n-1}\mu_p(m)\,u\left(\nu_1(m)\right)\ge \sum_{m=0}^{n-1}\mu_p(m)\,u\left(\nu_0(m)\right).
\]
As action \(1\) is safer than \(0\), the latter implication yields
\[
\sum_{m=0}^{n-1}\mu_p(m)\,\hat u\left(\nu_1(m)\right)\ge \sum_{m=0}^{n-1}\mu_p(m)\,\hat u\left(\nu_0(m)\right),
\]
i.e. \(\Delta_{\hat u}(p)\ge0\).
Thus \(\Delta_u(p)\ge0\Rightarrow \Delta_{\hat u}(p)\ge0\) for all \(p\), hence \(S_u\subseteq S_{\hat u}\).
Taking suprema delivers \(\bar{p}(\hat u)\ge \bar{p}(u)\).

It remains to justify that \(\bar{p}(u)\) is the maximal symmetric Nash equilibrium mixing probability.
A symmetric Nash equilibrium in mixing probabilities is any \(p\in\left[0,1\right]\) such that:
\[
p=0\ \Longrightarrow\ \Delta_u(0)\le0,
\qquad
p\in\left(0,1\right)\ \Longrightarrow\ \Delta_u(p)=0,
\qquad
p=1\ \Longrightarrow\ \Delta_u(1)\ge0.
\]
Let \(p^\star\coloneqq\bar{p}(u)\).

If \(S_u=\varnothing\), then \(\Delta_u(p)<0\) for all \(p\), so action \(0\) is a best reply to every symmetric conjecture and \(p=0\) is a symmetric Nash equilibrium; in this case \(p^\star=0\) by definition.

If \(S_u\ne\varnothing\) and \(p^\star=1\), then \(\Delta_u(1)\ge0\) and \(p=1\) is a symmetric Nash equilibrium.

If \(S_u\ne\varnothing\) and \(p^\star\in\left(0,1\right)\), then \(p^\star\in S_u\) and \(\Delta_u(p^\star)\ge0\).
Since \(\Delta_u(\cdot)\) is continuous (indeed a polynomial in \(p\)), we must have \(\Delta_u(p^\star)=0\); otherwise, \(\Delta_u(p^\star)>0\) would imply \(\Delta_u(p)>0\) for some \(p>p^\star\) close to \(p^\star\), contradicting the definition of \(p^\star\) as a supremum of \(S_u\). Hence, \(p^\star\) is an interior symmetric mixed equilibrium.

Finally, if \(p\) is any symmetric Nash equilibrium under \(u\), then either \(p\in\left(0,1\right)\) and \(\Delta_u(p)=0\), which implies \(p\in S_u\) and thus \(p\le\sup S_u\le p^\star\); or \(p=1\) and then \(1\in S_u\) so \(p^\star=1\); or \(p=0\) and trivially \(p\le p^\star\).
Thus \(p^\star\) is the maximal symmetric equilibrium mixing probability.
The same argument applies to \(\bar{p}(\hat u)\).
\end{proof}

\subsection{Proposition \ref{hedgeproof} Proof}\label{a:hedgeproof}
        \begin{proof}
    Let \(M_a \coloneqq \left\{\mu \in \Delta \colon \ \mathbb{E}[X_a^\mu] \geq \mathbb{E}[X_b^\mu] \right\}\). As in the proof of Theorem \ref{thm:central}, the extreme points of \(M_a\) are
    \[\mathrm{ext} M_a = \left\{\delta_{\theta}\colon \ \theta \in \mathcal{A} \cup \mathcal{C}\right\} \cup \left\{\mu_{\theta,\theta'} \colon (\theta,\theta') \in \mathcal{A} \times \mathcal{B} \wedge \mu_{\theta,\theta'}\left(\theta'\right)=\lambda_{\theta,\theta'}\right\}.\]
    For all \(\mu \in \mathrm{ext} M_a\), all expected-utility DMs prefer \(X^\mu_a\) to \(X^\mu_b\) if and only if \(X^\mu_a\) SOSD \(X^\mu_b\). This proves necessity. 
    
    For sufficiency, observe that if \(X^\mu_a\) SOSD \(X^\mu_b\) for all \(\mu \in \mathrm{ext} M_a\), all risk-averse expected-utility DMs prefer \(X^\mu_a\) to \(X^\mu_b\) for all \(\mu \in  M_a\). Consequently, \(X^\mu_a\) dominates \(X^\mu_b\) in the increasing concave order. Thus, by Theorem 4.A.6 in \citet{shaked2007stochastic}, for any \(\mu \in M_a\), there exists a random variable \(\tilde{X}^\mu\) such that \(X^\mu_a\) FOSD \(\tilde{X}^\mu\) SOSD \(X^\mu_b\), delivering the result.
    \end{proof}

\section{Generating Transitivity By Refining Safety}\label{noordersec}
    
    We can also construct refinements of safety that are transitive even when the state space is not totally ordered. Essentially, what we are doing is generating the structure we need \textit{through} the modified safety relations. To wit, let \(\left(\Theta, \underline{\blacktriangleright}\right)\) and \(\left(A, \succeq\right)\) be partially ordered sets. By Szpilrajn's extension theorem (\cite*{szpilrajn1930extension}), there exist total orders on \(\Theta\) and \(A\). We fix one of each and, abusing notation, denote them \(\underline{\blacktriangleright}\) and \(\succeq\), respectively. Then,
    \begin{definition}
        \(a\) is \emph{totally safer} than \(b\), \(a \succeq_T b\), if 
        \begin{enumerate}[label={(\roman*)},noitemsep,topsep=0pt]
            \item \(u\) is increasing in \(\theta\) for \(a\) and \(b\);
            \item \(u(b_\theta) \geq (>) u(a_\theta)\) imply \(u(b_{\theta'}) \geq (>) u(a_{\theta'})\) for all \(\theta' \underline{\blacktriangleright} \theta\); and
            \item \(a \succeq_S b\).
        \end{enumerate}
    \end{definition}
    Mirroring the proof of Proposition \ref{bigequivalence} nearly exactly, we obtain
    \begin{remark}
        \(a \succeq_T b\) if and only if \(b \geq a\). Moreover, \(\succeq_T\) is a partial order on the set of actions.
    \end{remark}
    \(\succeq_T\) is not a total order on the set of actions as not every pair of actions need satisfy our conditions for the total-safety relation. Now fix a total-order extension of the partial order on \(\Theta\), \(\underline{\blacktriangleright}\). We no longer do so with respect to \(A\).
    \begin{definition}
        \(a\) is \emph{partially safer} than \(b\), \(a \succeq_P b\), if 
        \begin{enumerate}[label={(\roman*)},noitemsep,topsep=0pt]
            \item \(a\) and \(b\) are comparable;
            \item \(u\) is increasing in \(\theta\) for \(a\) and \(b\);
            \item \(u(b_\theta) \geq (>) u(a_\theta)\) imply \(u(b_{\theta'}) \geq (>) u(a_{\theta'})\) for all \(\theta' \underline{\blacktriangleright} \theta\); and
            \item \(a \succeq_S b\).
        \end{enumerate}
    \end{definition}
    Then,
     \begin{remark}
        \(a \succeq_P b\) if and only if \(b \geq a\). Moreover, \(\succeq_P\) is a partial order on the set of actions.
    \end{remark}

\end{document}